\newcommand{\blind}{0}
\providecommand{\algorithmname}{Algorithm}
\numberwithin{equation}{section}
\numberwithin{figure}{section}
  \theoremstyle{plain}
  \theoremstyle{plain}
  \newtheorem{ass}{\protect\assname}
  \newtheorem{prop}{\protect\propositionname}
  \theoremstyle{plain}
  \newtheorem{rem}{\protect\remarkname}
\renewcommand\equationautorefname{\@gobble}
  \providecommand{\lemmaname}{Lemma}
  \providecommand{\propositionname}{Proposition}
  \providecommand{\remarkname}{Remark}
  \providecommand{\assname}{Assumption}
\begin{document}

\def\spacingset#1{\renewcommand{\baselinestretch}%
{#1}\small\normalsize} \spacingset{1}


\if0\blind
{
  \title{\bf Online Smoothing for Diffusion Processes
  Observed with Noise
}
  \author{Shouto Yonekura\thanks{
    SY acknowledges funding from the Alan Turing Institute under grant number TU/C/000013.
}\hspace{.2cm}\\
    Department of Statistical Science, University College London, \\
    Alan Turing Institute,\\
    and \\
    Alexandros Beskos\thanks{AB acknowledges funding via a Leverhulme Trust Prize.} \\
    Department of Statistical Science, University College London,\\
    Alan Turing Institute,}
  \maketitle
} \fi

\if1\blind
{
  \bigskip
  \bigskip
  \bigskip
  \begin{center}
    {\LARGE\bf Online Smoothing for Discretely Observed Jump Diffusions
}
\end{center}
  \medskip
} \fi

\bigskip
\begin{abstract}
We introduce a methodology for online estimation of smoothing expectations 
for a class of additive functionals, in the context of a rich family of diffusion processes 
(that may include jumps) -- observed at discrete-time instances. We overcome the unavailability of the transition density of the underlying SDE by working on the augmented pathspace. 
The new method can be applied, for instance, to carry out online parameter inference for the designated class of models. 
Algorithms defined on the infinite-dimensional pathspace have been developed the last years mainly in the context of MCMC techniques. There, the main  benefit is the achievement of mesh-free mixing times for the practical time-discretised algorithm used on a PC.  Our own methodology sets up the framework for infinite-dimensional online filtering --  an important positive practical consequence is the construct of estimates with variance that does not increase with decreasing mesh-size.  
Besides regularity conditions, our method is, in principle, applicable under the weak assumption  -- relatively to restrictive conditions often required in the MCMC or filtering literature of methods defined on  pathspace -- that the SDE covariance matrix is invertible.
\end{abstract}

\noindent%
{\it Keywords:}  Jump Diffusion; Data Augmentation; Forward-Only Smoothing; Sequential Monte Carlo; Online Parameter Estimation.
\vfill






\newpage
\spacingset{1.5} 

\section{Introduction\label{sec:Intro}}

Research in Hidden Markov Models (HMMs) has -- thus far -- provided effective online algorithms for the estimation of expectations of the smoothing distribution for the case of a class of additive functionals of the underlying signal. Such methods necessitate knowledge of the transition density of the Markovian part of the model between observation times. We carry out a related exploration for the (common in applications) case when the signal corresponds to a diffusion process, thus we are faced with the challenge that such transition densities are typically unavailable. Standard data augmentation schemes that work with the multivariate density of a large enough number of imputed points of the continuous-time signal will lead to ineffective algorithms. The latter will have the abnormal characteristic that -- for given Monte-Carlo iterates -- the variability of the produced estimates will increase rapidly as the resolution of the imputation becomes finer. One of the ideas underpinning the work in this paper is that development of effective algorithms instead requires respecting the structural properties of the diffusion process, thus we build up imputation schemes on the infinite-dimensional diffusion pathspace itself. As a consequence, the time-discretised algorithm used in practice on a PC will be stable under mesh-refinement.    

We consider continuous-time jump-diffusion models observed at discrete-time instances. 
The $d_x$-dimensional process, $X=\{X_{t};t\ge 0\}$, $d_x\ge 1$, is defined via
the following time-homogeneous stochastic differential
equation (SDE), with $X_{t-}:=\lim_{s\uparrow t}X_{t}$,
\begin{align}
dX_{t} =b(X_{t-})dt+\sigma(X_{t-})dW_{t}+dJ_{t},\quad X_{0}=x_0\in\mathbb{R}^{d_x},\,\,\,t\ge 0. \label{eq:SDE}
\end{align}
Solution $X$ is driven by the 
$d_w$-dimensional Brownian motion, $W=\{W_{t};t\ge 0\}$, $d_w\ge 1$, and the compound Poisson process,
$J=\{J_t;t\ge 0\}$. The SDE involves a drift function $b=b_{\theta}:\mathbb{R}^{d_x}\mapsto\mathbb{R}^{d_x}$ and coefficient matrix 
$\sigma=\sigma_{\theta}:\mathbb{R}^{d_x}\mapsto\mathbb{R}^{d_x\times d_w}$, for parameter $\theta\in\mathbb{R}^{p}$, $p\ge 1$.
Let $\{N_{t};t\ge 0\}$
be a Poisson process with intensity function $\lambda_{\theta}(\cdot)$, and 
 $\{\xi_{k}\}_{k\geq 1}$ i.i.d.~sequence of  
 random variables with Lebesgue density $h_{\theta}(\cdot)$; the c\`adl\`ag process $J$ is determined as $J_{t}= J_{t,\theta}=\sum_{i=1}^{N_{t}}\xi_{i}$.
We work under standard assumptions (e.g.\@ linear growth, Lipschitz
continuity for $b$, $\sigma$) that guarantee a unique global solution of (\ref{eq:SDE}), in a weak or strong sense,
see e.g.~\citet{oksendal2007applied}.  

SDE (\ref{eq:SDE}) is  observed with noise at 
 discrete-time instances
$0 = t_0 < t_{1}<t_{2}<\cdots <t_{n}$, $n\ge 1$. 
Without loss of generality, we assume equidistant observation times, 
with $\Delta:= t_1-t_{0}$.
We consider data $Y_{t_0},\ldots, Y_{t_n}$, and for simplicity we set,
\begin{gather*}
x_i:=X_{t_i},\quad  y_i=Y_{t_i}, \quad  0\le i\le n. 
\end{gather*}
Let $\mathcal{F}_{i}$ be a filtration generated by $X_{s}$ for $s\in[t_{i-1},t_i], 0< i\le n$. We assume,
\begin{align}
\big[\,Y_{t_i} \,\big|\,\{Y_{t_{j}};j< i\},\,\{X_{s};s\in[0,t_i]\}\,\big] \sim
g_{\theta}\big(dY_{t_i}\,\big|\,Y_{t_{i-1}},\mathcal{F}_{i}\big),\quad  0\le i \le n,  \label{eq:likelihood}
\end{align}
for conditional distribution $g_{\theta}(\cdot|Y_{t_{i-1}},\mathcal{F}_i)$ on $\mathbb{R}^{d_y}$, $d_y\ge 1$, under convention $Y_{t_{-1}}=y_{-1}=\emptyset$.
%
We write, 
\begin{align}
[\,x_{i} \,|\, x_{i-1}\,] \sim f_{\theta}(dx_{i}|x_{i-1}),
\label{eq:trans}
\end{align}
where $f_{\theta}(dx_i|x_{i-1})$ is the transition distribution of the driving SDE process (\ref{eq:SDE}).
We consider the density functions of $g_{\theta}(dy_i|y_{i-1},\mathcal{F}_i)$ 
and $f_{\theta}(dx_{i}|x_{i-1})$, and -- with some abuse of notation -- we  
write $g_{\theta}(dy_{i}|y_{i-1},\mathcal{F}_i)= g_{\theta}(y_{i}|y_{i-1},\mathcal{F}_i)dy_i$, $f_{\theta}(dx_{i}|x_{i-1})= f_{\theta}(x_i|x_{i-1})dx_i$, where $dy_i$, $dx_i$ denote Lebesgue measures. Our work develops under the following regime.
\begin{ass}
\label{ass:unknown}
The transition density $f_{\theta}(x'|x)$ is intractable;
the density $g_{\theta}(y'|y,\mathcal{F})$ is analytically available --
for appropriate $x'$, $x$, $y'$, $y$, $\mathcal{F}$ consistent with  preceding definitions.
\end{ass}
\noindent The intractability of the transition density $f_{\theta}(\cdot|\cdot)$ will pose challenges for the main  problems this paper aims to address.

Models defined via (\ref{eq:SDE})-(\ref{eq:trans}) are extensively used, e.g.,  in finance and
econometrics, for instance for capturing the market microstructure noise,
see \citet{ait2005often,ait2008high,hansen2006realized}.
The above setting belongs to the general class of HMMs, with a signal defined in continuous-time. See \citet{cappe2005inference,douc2014nonlinear}
for a general treatment of HMMs fully specified in discrete-time. A number of methods have been suggested in the literature 
for approximating the unavailable transition density -- mainly in the case of processes without jumps -- including:
asymptotic expansion techniques \citep{ait2005often,ait2002maximum,ait2008closed,kessler1997estimation};
martingale estimating functions \citep{kessler1999estimating};
generalized method of moments \citep{hansen1993back}; Monte-Carlo approaches \citep{wagn:89,durh:02, beskos2006exact}.
See, e.g., \citet{kessler2012statistical} for a detailed review.

For a given sequence $\{a_{m}\}_{m}$, we use the notation $a_{i:j}:=(a_{i},\ldots,a_{j})$, 
for integers $i\leq j$. 
Let $p_{\theta}(y_{0:n})$ denote the joint density of $y_{0:n}$. 
Throughout the paper, $p_{\theta}(\cdot)$ is used generically to represent probability distributions or densities of random variables appearing 
as its arguments.

Consider the maximum likelihood estimator (MLE),
\begin{align*}
\hat{\theta}_{n} & :=\arg\max_{\theta\in\Theta}\log p_{\theta}(y_{0:n}).
\end{align*}
Except for limited cases, one cannot obtain the MLE analytically
for HMMs (even for discrete-time signal) due to the intractability of 
$p_{\theta}(y_{0:n})$.

We have set up the modelling context for this work. 
The main contributions of the paper 
in this setting -- several of which relate with overcoming the intractability of the transition density of the SDE, and developing a methodology that is well-posed on the infinite-dimensional pathspace -- will be as follows:
\begin{itemize}
\item[(i)] We present an \emph{online} algorithm that delivers Monte-Carlo estimators of smoothing expectations,
\begin{equation}
\label{eq:exp}
\mathcal{S}_{\theta,n}=\int S_{\theta}(\mathbf{x}_{0:n}) p_{\theta}(d\mathbf{x}_{0:n}|y_{0:n}), \quad n\ge 1,
\end{equation}
 for the class of additive functionals $S_{\theta}(\cdot)$ of the structure, 
\begin{equation}
\label{eq:sum} 
 S_{\theta}(\mathbf{x}_{0:n})=\sum_{k=0}^{n}s_{\theta,k}(x_{k-1},\mathbf{x}_{k}),
\end{equation} 
under the conventions $x_{-1}=\emptyset$, $\mathbf{x}_0 = x_0$.
The bold type notation $\mathbf{x}_k$, $k\ge 0$, is reserved for carefully defined variables involving elements of the infinite-dimensional pathspace. The specific construction of $\mathbf{x}_k$ will depend on the model at hand, and will be explained in the main part of the paper. We note that the class of additive functionals considered supposes additivity over time, and therefore includes the general case of evaluating the score function.
The online solution of the smoothing problem is often used as the means to solve some concrete inferential problems - see, e.g., (ii) below.

\item[(ii)] We take advantage of the new approach to show numerical applications, with emphasis on
carrying out \emph{online} parameter inference for the designated class of models via a gradient-ascent  approach (in a Robbins-Monro stochastic gradient framework). A critical aspect of this particular online algorithm (partly likelihood based, when concerned with parameter estimation; partly Bayesian, with regards to identification of filtering/smoothing expectations) is that it delivers estimates of the evolving score function,  of the model parameters, together with particle representations of the filtering distributions, through a \emph{single passage} of the data. 
This is a unique favourable algorithmic characteristic, when constrasted with alternative algorithms with similar objectives, such as, e.g.,  Particle MCMC \citep{andrieu2010particle}, or SMC$^2$ \citep{chop:13}.   

\item[(iii)] In this work, we will \emph{not} characterise analytically the size of the
time-discretisation bias relevant to the SDE models at hand, and are content that: (I) the bias can be decreased by increasing the resolution of 
the numerical scheme (typically an Eyler-Maruyama one,
or some other Taylor scheme, see e.g.~\cite{kloe:13});
(II) critically, the Monte-Carlo algorithms are developed in a manner that the variance does not increase (in the limit, up to infinity) when increasing the resolution of the time-discretisation method; to achieve such an effect, the  
algorithms are (purposely) defined on the infinite-dimensional pathspace, and SDE paths are only discretised when implementing the algorithm on a PC (to allow, necessarily, for finite computations).

\item[(iv)] Our method draws inspiration from earlier works, in the context of online filtering for discrete-time HMMs and infinite-dimensional pathspace MCMC methods. The complete construct is novel; one consequence of this is that it is applicable, in principle, for a wide class of SDEs, under the following weak assumption (relatively to restrictive conditions often imposed in the literature of infinite-dimensional MCMC methods).
\begin{ass}
\label{ass:invert}
The diffusion covariance matrix function,
\begin{equation*}
\Sigma_{\theta}(v):=\sigma_{\theta}(v)\sigma_{\theta}(v)^{\top}\in\mathbb{R}^{d_x\times d_x},
\end{equation*}
 is invertible, for all relevant $v$, $\theta$.
\end{ass}
\end{itemize}
Thus, the methodology does not apply as defined here only for the class of hypoelliptic SDEs. 

An elegant solution to the online smoothing problem posed above in (i), for the case of a standard HMM with discrete-time 
signal of known transition density $f_{\theta}(x'|x)$, is given in \cite{del2010forward, poyi:11}. Our own work overcomes 
 the unavailability of the transition density in the continuous-time scenario by following the above literature but augmenting the hidden state with the complete continuous-time SDE path. Related augmentation approaches in this setting -- though for different inferential objectives -- have appeared in 
 \cite{fearnhead2008particle, stro:09, gloaguen2018online}, where the auxiliary variables are derived via the Poisson estimator of transition densities for SDEs  (under strict conditions on the class of SDEs; no jumps), introduced in \cite{beskos2006exact},
 and in \cite{sark:08} where the augmentation involves indeed the continuous-time path (the objective therein is to solve the filtering problem and the method is applicable for SDEs with additive Wiener noise; no jump processes are considered).   

\textbf{A Motivating Example:} 
Fig.~\ref{fig:non-stable} (Top-Panel) shows
estimates of the score function, evaluated at the true parameter value $\theta=\theta^{\dagger}$, for parameter $\theta_3$ 
of the
Ornstein--Uhlenbeck (O--U) process, 
$dX_t = \theta_1(\theta_2-X_t)dt + \theta_3 dW_t$, $X_0=0.0$, for $n=10$ observations $y_i = x_i + \epsilon_i$, $\epsilon_i\stackrel{i.i.d}{\sim}\mathcal{N}(0,0.1^2)$, $1\le i \le n$. Data were simulated from 
$\theta^{\dagger}=(0.5,0.0,0.4)$ with an Euler-Maruyama scheme of $M^{\dagger}=1,000$ grid points per unit of time.
Fig.~\ref{fig:non-stable} (Top-Panel) illustrates the `abnormal' effect of a standard data-augmentation scheme, where for $N=100$ particles, the Monte-Carlo method (see later sections for details) produces estimates of increasing variability as algorithmic resolution increases with 
$M=10, 50, 100, 200$ -- i.e., as it approaches the `true' resolution used for the data generation. In contrast, Fig.~\ref{fig:non-stable} (Bottom Panel) shows results for the same score function as estimated by our proposed method in this paper. 
Our method is well-defined on the infinite-dimensional pathspace, and, consequently, it manages to robustly estimate  the score function under mesh-refinement.

\begin{figure}[!h]
\begin{centering}
\includegraphics[scale=0.5]{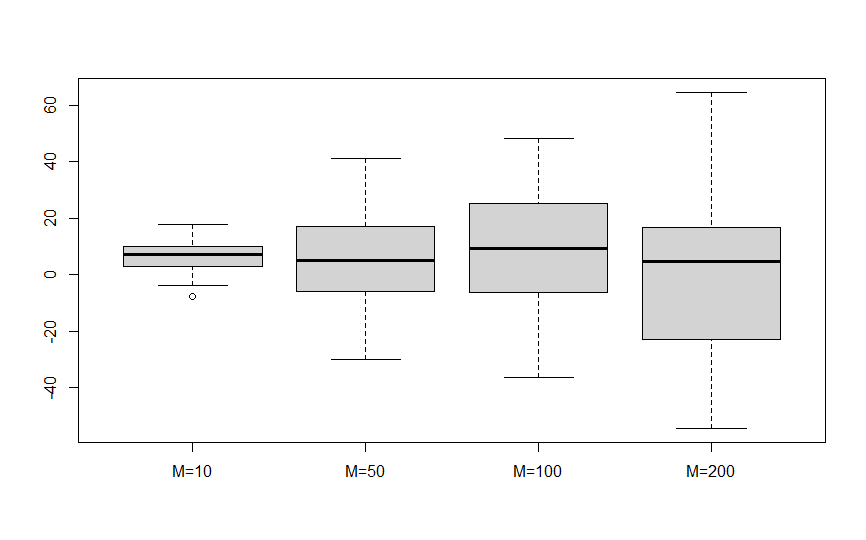}\\[-0.7cm]
\includegraphics[scale=0.5]{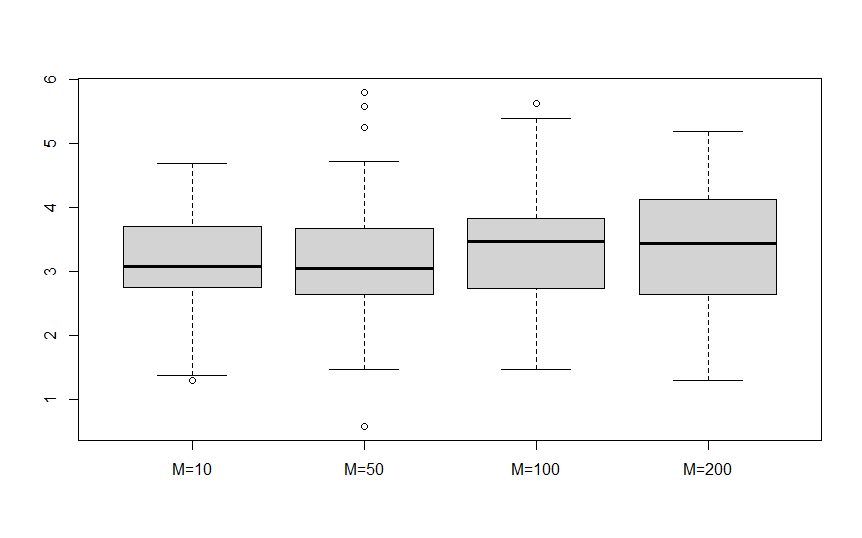}
\par\end{centering}
\vspace{-0.5cm}
\caption{Boxplots of estimated score functions of $\theta_{3}$ for the O--U process, for different resolution sizes $M$, 
over $R=50$ experiment replications. 
$N=100$ particles 
were used in all cases, for the same $n=10$ data-points.
\emph{Top-Panel}: Results are produced  
via a Monte-Carlo algorithm that uses a naive `finite-dimensional' data augmentation approach.
\emph{Bottom-Panel}: Results are produced  
via a Monte-Carlo algorithm that uses the `infinite-dimensional' data augmentation approach introduced in this paper.}
\label{fig:non-stable}
\end{figure}

The rest of the paper is organised as follows.
Section \ref{sec:online-smoothing} reviews the Forward-Only algorithm for the online approximation of expectations of a class of additive functionals described in \cite{del2010forward}.
Section~\ref{sec:SDE and Girsanov} sets up the framework for the treatment of pathspace-valued SDEs, first for the conceptually simpler case of SDEs without jumps, and then proceeding with incorporating jumps.
Section~\ref{sec:cont} provides the complete online approximation algorithm, constructed on the infinite-dimensional pathspace.
Section \ref{sec:online} discusses the adaptation of the developed methodology for the purposes of online inference
for unknown parameters of the given SDE model. 
Section \ref{sec:numerics} shows numerical applications of the developed methodology.
Section \ref{sec:Conclusion-and-remarks} contains conclusions and directions for future research. 
\section{Forward-Only Smoothing}
\label{sec:online-smoothing}

A bootstrap filter \citep{gordon1993novel} is applicable
in the continuous-time setting, as it only requires forward sampling of the 
underlying signal $\{X_t\}_{t\ge 0}$; this is trivially possible -- under numerous approaches -- and is typically associated with the introduction of some time-discretisation bias \citep{kloe:13}. However, the transition density is still required for the smoothing problem we have posed in the Introduction. In this section, we assume 
a standard discrete-time HMM, with initial distribution $p_{\theta}(dx_0)$, transition density $f_{\theta}(x'|x)$, 
and likelihood $g_{\theta}(y|x)$, for appropriate $x, x'\in \mathbb{R}^{d_x}$, 
$y\in \mathbb{R}^{d_y}$,
and review the online algorithm developed in 
\cite{del2010forward} for this setting. Implementation of the bootstrap filter provides 
an approximation of the smoothing distribution $p_{\theta}(dx_{0:n}|y_{0:n})$
by following the geneology of the particles. This method is studied, e.g., in \citet{cappe2009online,dahlhaus2010particle}.
Let $\{x_{0:n}^{(i)},W_{n}^{(i)}\}_{i=1}^{N}$, $N\ge 1$,
be a particle approximation of the smoothing distribution $p_{\theta}(dx_{0:n}|y_{0:n})$, in
the sense that we have the estimate,
\begin{align}
\widehat{p}_{\theta}(dx_{0:n}| y_{0:n})=\sum_{i=1}^{N}W_{n}^{(i)}\delta_{x_{0:n}^{(i)}}(dx_{0:n}),\quad \sum_{i=1}^{N}W_{n}^{(i)}=1,
\label{eq:path-approximation}
\end{align}
%
with $\delta_{x_{0:n}^{(i)}}(dx_{0:n})$ 
 the Dirac
measure with an atom at $x_{0:n}^{(i)}$. 
Then, replacing  $p_{\theta}(x_{0:n}|y_{0:n})$ with 
its estimate in (\ref{eq:path-approximation}) provides consistent estimators of expectations of the HMM smoothing distributions. 
Though the
method is online and the computational cost per time step is $\mathcal{O}(N)$, it typically suffers from
the well-documented \emph{path-degeneracy} problem --
as illustrated via theoretical results or numerically \citep{del2010forward, kantas2015particle}. 
That is, as $n$ increases, the particles 
representing $p_{\theta}(dx_{0:n}| y_{0:n})$ obtained by the
above method will eventually all share the same ancestral particle due
to the resampling steps, and the approximation
collapses for big enough~$n$. This is well-understood not to be a solution to the approximation of the smoothing distribution for practical applications. 

An approach which overcomes path-degeneracy is the  
Forward Filtering Backward Smoothing (FFBS) algorithm of \cite{doucet2000sequential}. 
We briefly review the method here, following closely the notation and development in 
\cite{del2010forward}. In the forward direction, assume that a filtering algorithm (e.g.~bootstrap) has 
provided a particle approximation of the filtering distribution $p_{\theta}(dx_{k-1}|y_{0:k-1})$ -- assuming a relevant $k$ --,
\begin{align}
\label{eq:p-filter}
\widehat{p}_{\theta}(dx_{k-1}|y_{0:k-1})= \sum_{i=1}^{N}W_{k-1}^{(i)}\delta_{x_{k-1}^{(i)}}(dx_{k-1}),
\end{align}
for weighted particles $\{x_{k-1}^{(i)},W_{k-1}^{(i)}\}_{i=1}^{N}$.
In the backward direction, assume that one is given the particle approximation 
of the marginal smoothing distribution $p_{\theta}(dx_k|y_{0:n})$, 
\begin{align}
\widehat{p}_{\theta}(dx_{k}|y_{0:n}) = \sum_{i=1}^{N}W_{k|n}^{(i)}\delta_{x_k^{(i)}}(dx_k).
\label{eq:p-back}
\end{align}
One has that \citep{kitagawa1987non},
\begin{align}
p_{\theta}(dx_{k-1:k}| y_{0:n}) &= p_{\theta}(dx_k|y_{0:n})\otimes p_{\theta}(dx_{k-1}|x_{k},y_{0:k-1}) \nonumber \\  
&= p_{\theta}(dx_k|y_{0:n})\otimes p_{\theta}(dx_{k-1}| y_{0:k-1})\frac{f_{\theta}(x_{k}| x_{k-1})}{\int f_{\theta}(x_{k}| x_{k-1})p_{\theta}(x_{k-1}|y_{0:k-1})dx_{k-1}}.
\label{eq:Kitagawa-Recursion}
\end{align}
Using (\ref{eq:p-filter})-(\ref{eq:p-back}), and based on equation (\ref{eq:Kitagawa-Recursion}),
we obtain the 
approximation, 
\begin{align}
\widehat{p}_{\theta}(dx_{k-1:k}| y_{0:n}) = \sum_{j=1}^{N}W_{k|n}^{(j)}\sum_{i=1}^{N}\frac{f_{\theta}(x_{k}^{(j)}|x_{k-1}^{(i)})W_{k-1}^{(i)}}{\sum_{l=1}^{N}f_{\theta}(x_{k}^{(j)}| x_{k-1}^{(l)})W_{k-1}^{(l)}}\delta_{(x_{k-1}^{(i)},x_k^{(j)})}(dx_{k-1:k}).\label{eq:SMC-Smoothing}
\end{align}
Recalling the expectation of additive functionals in (\ref{eq:exp})-(\ref{eq:sum}) -- where now, in the discrete-time setting, we can ignore the bold $\mathbf{x}_{k}$ elements, and 
simply use $x_k$ instead -- 
the above calculations give rise to the following estimator of the target quantity $\mathcal{S}_{\theta,n}$ in (\ref{eq:sum}),
\begin{equation*}
\widehat{\mathcal{S}}_{\theta,n} = \sum_{k=0}^{n} \int s_{\theta,k}(x_{k-1},x_k)\,\,\widehat{p}_{\theta}(dx_{k-1:k}| y_{0:n}). 
\end{equation*}
To be able to apply the above method, the marginal smoothing approximation in (\ref{eq:p-back}) is 
obtained via a backward recursive approach. In particular, 
starting from $k=n$ (where the approximation is provided by the standard forward particle filter), one proceeds as follows. 
Given $k$, the quantity for $k-1$ is directly obtained 
by integrating out $x_k$ in (\ref{eq:SMC-Smoothing}), 
thus we have, 
\begin{align*}
\widehat{p}_{\theta}(dx_{k-1}|y_{0:n}) = \sum_{i=1}^{N} W_{k-1|n}^{(i)} \delta_{x_{k-1}^{(i)}}(dx_{k-1}), 
\end{align*}
for the normalised weights, 
\begin{align*}
W_{k-1|n}^{(i)} \propto \sum_{j=1}^{N}W_{k|n}^{(j)}
\,\frac{f_{\theta}(x_k^{(j)}|x_{k-1}^{(i)})W_{k-1}^{(i)}}{\sum_{l=1}^{N}f_{\theta}(x_{k}^{(j)}| x_{k-1}^{(l)})W_{k-1}^{(l)}}. 
\end{align*}
%
Notice that -- in this version of FFBS -- 
the same particles $\{x_{k}^{(i)}\}_{i=1}^{N}$ are used in both directions (the ones before resampling at the forward filter), but with different weights.

An important development made in \cite{del2010forward} is transforming the  above \emph{offline} algorithm 
 into an \emph{online} one. This is achieved by consideration of the sequence of  instrumental functionals,   
\begin{align*}
T_{\theta,0}(x_0)= s_{\theta,0}(x_0); \qquad  
T_{\theta,n}(x_n):=  \int S_{\theta,n}(x_{0:n})p_{\theta}(dx_{0:n-1}|y_{0:n-1},x_{n}), 
\quad n\ge 1.
\end{align*}
Notice that, first,
\begin{align*}
\mathcal{S}_{\theta,n}
%
  =\int T_{\theta,n}(x_{n})p_{\theta}(dx_{n}| y_{0:n}).
\end{align*}
We also have that, 
\begin{align}
T_{\theta,n}(x_{n}) &=\int\big[T_{\theta,n-1}(x_{n-1})+s_{\theta,n}(x_{n-1},x_{n})\big]p_{\theta}(dx_{n-1}|y_{0:n-1},x_{n}) \nonumber \\
&\equiv \frac{\int\big[T_{\theta,n-1}(x_{n-1})+s_{\theta,n}(x_{n-1},x_{n})\big]f_{\theta}(x_{n}|x_{n-1})p_{\theta}(
dx_{n-1}|y_{0:n-1})}{\int f_{\theta}(x_n|x_{n-1})p_{\theta}(dx_{n-1}|y_{0:n-1})},
\label{eq:recu}
\end{align}
-- see Proposition 2.1 of \citet{del2010forward} for a (simple) proof. 
This recursion provides an online -- forward-only -- 
advancement of FFBS for estimating the smoothing 
expectation of additive functionals. The complete method is summarised in  Algorithm \ref{alg:forward}: one key ingredient is that, during the recursion, 
values of the functional $T_{\theta,n}(x_n)$ 
are only required at the discrete positions $x_{n}^{(i)}$ determined by the forward particle filter.


In the SDE context, under Assumption \ref{ass:unknown}, the transition
density $f_{\theta}(\cdot|\cdot)$ is considered intractable, 
thus Algorithm \ref{alg:forward} -- apart from serving as a review of the method in \cite{del2010forward} -- does not appear to be practical in 
the continuous-time case.

\begin{algorithm}[!h]
\caption{Online Forward-Only Smoothing \citep{del2010forward}}
\label{alg:forward}
\begin{enumerate}
\item[(i)] Initialise particles $\{x_{0}^{(i)},W_{0}^{(i)}\}_{i=1}^{N}$, with  $x_{0}^{(i)}\stackrel{iid}{\sim}p_{\theta}(dx_0)$, 
$W_{0}^{(i)} \propto g_{\theta}(y_0|x_0^{(i)})$, and functionals 
$\widehat{T}_{\theta,0}(x_0^{(i)})= s_{\theta,0}(x_0^{(i)})$, 
for $1\le i\le N$. 
\item[(ii)] Assume that at time $n-1$, one has a particle approximation $\{x_{n-1}^{(i)},W_{n-1}^{(i)}\}_{i=1}^{N}$
of the filtering law $p_{\theta}(dx_{n-1}| y_{0:n-1})$ and  estimators $\widehat{T}_{\theta,n-1}(x_{n-1}^{(i)})$
of $T_{\theta,n-1}(x_{n-1}^{(i)})$, for $1\leq i\leq N$.
\item[(iii)] At time $n$, sample $x_{n}^{(i)}$, for $1\leq i \leq  N$, from the
mixture \citep{gordon1993novel},
\begin{align*}
x_{n}^{(i)}  \sim \widehat{p}_{\theta}(x_n|y_{0:n}) = \sum_{j=1}^{N}W_{n-1}^{(j)}f_{\theta}(x_{n}|x_{n-1}^{(j)}),
\end{align*}
and assign particle weights $W_n^{(i)}\propto g_{\theta}(y_n|x_n^{(i)})$, $1\le i \le N$.
\item[(iv)] Then set, for $1\leq i\leq N$, 
\begin{align*}
\widehat{T}_{\theta,n}(x_{n}^{(i)}) & =\frac{\sum_{j=1}^{N}W_{n-1}^{(j)}f_{\theta}(x_{n}^{(i)}| x_{n-1}^{(j)})}{\sum_{l=1}^{N}W_{n-1}^{(l)}f_{\theta}(x_{n}^{(i)}| x_{n-1}^{(l)})}\big[\widehat{T}_{\theta,n-1}(x_{n-1}^{(j)})+s_{\theta,n}(x_{n-1}^{(j)},x_{n}^{(i)})\big].
\end{align*}
\item[(v)] Obtain an estimate of $\mathcal{S}_{\theta,n}$ as, 
\begin{align*}
\widehat{\mathcal{S}}_{\theta,n} & =\sum_{i=1}^{N}W_{n}^{(i)}\widehat{T}_{\theta,n}(x_{n}^{(i)}).
\end{align*}
\end{enumerate}
\end{algorithm}

\section{Data Augmentation on Diffusion Pathspace}
\label{sec:SDE and Girsanov}

To overcome the intractability of the transition
density $f_{\theta}(\cdot|\cdot)$ of the SDE, we will work with an algorithm that is defined in 
continuous-time and makes use of the complete SDE path-particles in its development. 
The new method has connections with earlier works in the literature. \citet{sark:08} 
focus on the filtering problem for a class of models related to (\ref{eq:SDE})-(\ref{eq:trans}), and 
come up with an approach that requires the complete SDE path, for a limited class of diffusions with additive noise and no jumps. \cite{fearnhead2008particle} also deal with the filtering problem, and -- equipped with an unbiased estimator of the unknown transition density -- recast the 
problem as one of filtering over an augmented space that incorporates the randomness for the unbiased estimate. The method is accompanied by strict conditions on the drift and diffusion coefficient (one should be able to transform the SDE -- no jumps -- into one of unit diffusion coefficient; the drift of the SDE must have a gradient form). Our contribution requires, in principle, solely the diffusion coefficient invertibility Assumption \ref{ass:invert}; arguably, the weakened condition we require stems from the fact that our approach appears as the relatively 
most `natural' extension (compared to alternatives) of the standard discrete-time algorithm of \cite{del2010forward}. 

The latter discrete-time 
method requires the density  $f_\theta(x'|x)=f_\theta(dx'|x)/dx'$.
In continuous-time, we obtain
an analytically available Radon-Nikodym derivative of $p_{\theta}(d\mathbf{x}'|x)$, 
for a properly defined variate $\mathbf{x}'$ that involves information about the 
continuous-time path for moving from $x$ to $x'$ within time $\Delta$. 
We will give the complete algorithm in Section \ref{sec:cont}. 
In this section, 
we prepare the ground via carefully determining $\mathbf{x}'$ given $x$, and calculating the 
relevant densities to be later plugged in into our method.

\subsection{SDEs with Continuous Paths}
\label{subsec:Girsanov}

We work first with the process with continuous sample-paths, i.e.~of dynamics,
\begin{align}
dX_{t} =b_{\theta}(X_{t})dt+\sigma_{\theta}(X_{t})dW_{t}. \label{eq:SDEc}
\end{align}
We adopt an approach motivated by techniques used for MCMC
algorithms \citep{chib:04,goli:08,robe:01}.
Assume we are given starting point $x\in\mathbb{R}^{d_x}$, ending point $x'\in\mathbb{R}^{d_x}$,
and the complete continuous-time path  for the signal process in (\ref{eq:SDEc}) on 
$[0,T]$, for some $T\ge 0$. That is, we now work with the path process,
\begin{align}
\big[\,\{X_{t};t\in[0,T]\}\,\big|\, X_{0}=x,X_{T}=x'\,\big].
\label{target law}
\end{align}
Let $\mathbb{P}_{\theta,x,x'}$ denote the probability distribution of the pathspace-valued variable in (\ref{target law}).
We consider
the auxiliary bridge process $\tilde{X}=\{\tilde{X}_{t};t\in[0,T]\}$ defined as, 
\begin{align}
d\tilde{X}_{t} & =\Big\{ b_{\theta}(\tilde{X}_{t})+\frac{x'-\tilde{X}_{t}}{T-t}\Big\} dt+\sigma_{\theta}(\tilde{X}_{t})dW_{t},\quad \tilde{X}_{0}=x,\quad t\in[0,T],
\label{auxiliary SDE}
\end{align}
with corresponding probability distribution $\mathbb{Q}_{\theta,x,x'}$. Critically,
a path of $\tilde{X}$ starts at point $x$ and finishes at $x'$, w.p.~1.
Under regularity conditions, \citet{dely:06} prove that probability measures $\mathbb{P}_{\theta,x,x'}$, $\mathbb{Q}_{\theta,x,y}$ are absolutely continuous with respect to each other. We treat the auxiliary SDE (\ref{auxiliary SDE}) as a transform
from the driving noise to the solution, whence
a sample path, $X$, of the process 
$\tilde{X}=\{\tilde{X_t}; t\in[0,T]\}$, 
is produced by a mapping -- determined by (\ref{auxiliary SDE}) -- of a corresponding sample path, say $Z$, of the 
Wiener process. That is, we have set up a map, and -- under Assumption \ref{ass:invert} -- its inverse,   
\begin{equation} 
Z \mapsto X=: F_{\theta}(Z;x,x'), \quad Z =F^{-1}_{\theta}(X;x,x'). \label{transform}
\end{equation}
More analytically, $F_{\theta}^{-1}$ is given via the transform,
\begin{align*}
dZ_{t} & =\sigma_{\theta}(X_{t})^{-1}\Big\{dX_{t}-b_{\theta}(X_t)dt-\frac{x'-X_{t}}{T-t}dt\Big\}.
\end{align*}
In this case we define,
\begin{equation*}
\mathbf{x}' := (x',Z),
\end{equation*}
and the probability measure of interest is, 
\begin{equation}
\label{eq:double}
p_{\theta}(d\mathbf{x}'|x) := f_{\theta}(dx'|x)\otimes p_{\theta}(dZ|x',x). 
\end{equation}
%

%
%
%
\noindent 
Let $\mathbb{W}$ be the standard
Wiener probability measure on $[0,T]$. 
Due to the 1--1 transform, we have that, 
\begin{equation*}
\frac{p_{\theta}(dZ|x',x)}{\mathbb{W}(dZ)} \equiv 
\frac{d\mathbb{P}_{\theta,x,x'}}
{d\mathbb{Q}_{\theta,x,x'}}(F_{\theta}(Z;x,x')),
\end{equation*}
so it remains to obtain the density $d\mathbb{P}_{\theta,x,x'}/
d\mathbb{Q}_{\theta,x,x'}$.

\label{rem:density}
Such a Radom-Nikodym derivative has been object of interest in many works. \cite{dely:06} provided detailed conditions and a proof, but (seemingly) omit an expression for the normalising constant which is important in our case, as it involves the parameter $\theta$ -- in our applications later in the paper, we aim to infer about $\theta$.
\cite{papaspiliopoulos2013data, papaspiliopoulos2009importance} provide an expression based on a conditioning argument for the projection of the probability measures on $[0,t)$, $t<T$, and passage to the limit $t\uparrow T$. 
The derivations in \cite{dely:06} are extremely rigorous,
so we will make use the expressions in that paper. Following carefully the proofs of some of their main results (Theorem~5, together with Lemmas 7, 8) one can indeed retrieve the constant in the deduced density. In particular,
\cite{dely:06} impose the following conditions.
\begin{ass}
\label{ass:dely}
\begin{itemize}
\item[(i)] SDE (\ref{eq:SDEc}) admits a strong solution;
\item[(ii)] $v\mapsto\sigma_{\theta}(v)$ is in  $\mathcal{C}^{2}_b$ (i.e., twice continuously differentiable and bounded, with bounded first and second derivatives), and it is invertible, with bounded inverse;
\item[(iii)] $v\mapsto b_{\theta}(v)$ is locally Lipschitz, locally bounded.
\end{itemize}
\end{ass}
Under Assumption \ref{ass:dely},
\cite{dely:06} prove that,  
%
%
\begin{gather}
\frac{d\mathbb{P}_{\theta,x,x'}}
{d\mathbb{Q}_{\theta,x,x'}}(X) =  \frac{|\Sigma_{\theta}(x')|^{1/2}}{|\Sigma_{\theta}(x)|^{1/2}}
\times \frac{\mathcal{N}(x';x, T\Sigma_{\theta}(x))}
{f_{\theta}(x'|x)}\times \varphi_{\theta}(X;x,x'),
\label{eq:ddely}
\end{gather}
where $\mathcal{N}(v;\mu,V)$ is the density function of the Gaussian law on $\mathbb{R}^{d_x}$ 
with mean $\mu$, variance $V$, evaluated at $v$, 
$|\cdot|$ is matrix determinant, 
and $\varphi_{\theta}(X;x,x')$ is such that, 
\begin{align*}
\log \varphi_{\theta}(X;x,x') = \int_0^{T}&\big\langle \,b_{\theta}(X_t),\Sigma^{-1}_{\theta}(X_t)dX_t\,\big\rangle 
 - \tfrac{1}{2}\int_{0}^{T} \big\langle\, b_{\theta}(X_t),  \Sigma_{\theta}^{-1}(X_t)b_{\theta}(X_t)dt\,\big\rangle
\\ &\!\!\!\!\!\!\!- \tfrac{1}{2}\int_{0}^{T}
\tfrac{\left\langle\, (x'-X_t),\,d\Sigma^{-1}_{\theta}(X_t)(x'-X_t)\,\right\rangle}{T-t} 
-\tfrac{1}{2}\int_{0}^{T}\tfrac{\sum_{i,j=1}^{d_x}d\,[\,\Sigma_{\theta,ij}^{-1}, (x'_i-X_{t,i})(x'_j-X_{t,j})\,]}{T-t}.
\end{align*}
Here, $[\cdot,\cdot]$ denotes the quadratic variation process for semi-martingales; also, $\langle \cdot, \cdot \rangle$ is the standard inner-product on $\mathbb{R}^{d_x}$. We note that 
transforms different from \eqref{transform} have been proposed in the literature  \citep{dellaportas2006bayesian,kalogeropoulos2010inference}
to achieve the same effect of obtaining an 1--1 mapping of the latent path that has a density with respect to a measure that does not depend on $x$, $x'$ or $\theta$.
However, such methods are mostly applicable for scalar diffusions \citep{ait2008closed}. Auxiliary variables involving a random, finite selection of points of the latent path, based on 
the (generalised) Poisson estimator of \cite{fearnhead2008particle} are similarly
restrictive. In contrast to other attempts, our
methodology may be applied for a much more general class of SDEs, as determined by Assumption \ref{ass:invert} -- and further regularity conditions, e.g.~as in Assumption \ref{ass:dely}. Thus, continuing from (\ref{eq:double}) we have obtained that, 
\begin{align}
\nonumber
\frac{p_{\theta}(d\mathbf{x}'|x)}{(\mathrm{Leb}^{\otimes d_x}\otimes \mathbb{W})(d\mathbf{x}')}&= \varphi_{\theta}\big(F_{\theta}(Z;x,x');x,x'\big)\times
\mathcal{N}(x';x,T\Sigma_{\theta}(x))\times \frac{|\Sigma_{\theta}(x')|^{1/2}}{|\Sigma_{\theta}(x)|^{1/2}} \\
&=: p_{\theta}(\mathbf{x}'|x) \equiv p_{\theta}(\mathbf{x}'|x; T).
\label{eq:finally}
\end{align}
We have added the extra argument involving the length of path, $T$, in the last expression, as it will be of use in the next section. 

\begin{rem}
A critical point here is that the above density is \emph{analytically tractable}, thus by working on pathspace we have overcome the 
unavailability of the transition density $f_{\theta}(x'|x)$.
\end{rem}

\subsection{SDEs with Jumps}
\label{subsec:Jump-Diffusions}

We extend the above developments to
the more general case of the  $d_x$-dimensional jump diffusion model given in (\ref{eq:SDE}), which we re-write here for convenience,
\begin{align}
dX_{t} & =b_{\theta}(X_{t{-}})dt+\sigma_{\theta}(X_{t{-}})dW_{t}+dJ_{t},\quad X_0=x\in\mathbb{R}^{d_x},\,\,\,t\in[0,T].\label{Jump-SDE}
\end{align}
Recall that $J_{\theta}= J=\{ J_{t}\} $
denotes a compound Poisson process with jump intensity $\lambda_{\theta}(\cdot)$ and
jump-size density $h_{\theta}(\cdot)$.
Let $\mathbb{F}_{\theta,x}(\cdot)$ denote the law of the unconditional process
(\ref{Jump-SDE}) and $\mathbb{L}_{\theta}$ the law of the involved compound Poisson process.
%
%
We write $J=((\tau_1,b_1), \ldots, (\tau_{\kappa},b_{\kappa}))$
to denote the jump process, 
where $\{\tau_{i}\}$ are the times of events, $\{b_{i}\}$ the jump sizes
and $\kappa\ge 0$ the total number of events. 
In addition, we consider the reference 
measure $\mathbb{L}$, corresponding to unit rate Poisson process measure on $[0,T]$ multiplied with $\otimes_{i=1}^{\kappa+1} \mathrm{Leb}^{\otimes d_x}$.

\subsubsection*{Construct One}
We consider the random variate,
\begin{align*}
\mathbf{x}' = \big( J, \{x_{\tau_{i}-}\}_{i=1}^{\kappa+1}, \{Z(i)\}_{i=1}^{\kappa+1} \big),    
\end{align*}
under the conventions $x_{\tau_{0}-}\equiv x$, $x_{\tau_{\kappa+1}-}\equiv x'$, 
where we have defined,
\begin{align*}
Z(i) = F^{-1}_{\theta}(X(i)\,;\,x_{\tau_{i-1}},x_{\tau_{i}-}), \quad 
X(i) := \{X_t;t\in [x_{\tau_{i-1}}, x_{\tau_{i}})\},\quad 
1\le i \le \kappa+1.
\end{align*}
We have that, 
\begin{align*}
p_{\theta}(d\mathbf{x}'|x) := 
\mathbb{L}_{\theta}(dJ)\otimes \Big[ \otimes_{i=1}^{\kappa+1} 
\big\{\,f_{\theta}(dx_{\tau_{i}-}|x_{\tau_{i-1}})\otimes p_{\theta}(dZ(i)|x_{\tau_{i-1}},x_{\tau_{i}-})\,\big\}
\Big].
\end{align*}
Using the results about SDEs without jumps in Section \ref{subsec:Girsanov},
and in particular expression (\ref{eq:finally}), 
upon defining, 
\begin{equation*}
\mathbf{x}'(i):=(x_{\tau_{i}-}, Z(i)),    
\end{equation*}
we have that -- under an apparent adaptation of Assumption \ref{ass:dely},
\begin{align*}
\frac{f_{\theta}(dx_{\tau_{i}-}|x_{\tau_{i-1}})\otimes p_{\theta}(dZ(i)|x_{\tau_{i-1}},x_{\tau_{i}-})}{
\mathrm{Leb}^{\otimes d_x}(dx_{\tau_{i}-})\otimes \mathbb{W}(dZ(i)))
} = 
p_{\theta}(\mathbf{x}'(i)|x_{\tau_i-}; \tau_{i}-\tau_{i-1}),
\end{align*}
with the latter density $p_{\theta}(\mathbf{x}'(i)|x_{\tau_i-})$ 
determined as in (\ref{eq:finally}), given clear adjustments.
Thus, the density of $p_{\theta}(d\mathbf{x}'|x)$ with respect to the reference 
measure,
\begin{align*}
\mu(d\mathbf{x}'):=\mathbb{L}(dJ)\otimes \Big[  \otimes_{i=1}^{\kappa+1} \big\{ \mathrm{Leb}^{\otimes d_x}(dx_{\tau_{i}-})\otimes \mathbb{W}(dZ(i))\big\} \Big],
\end{align*}
is equal to, 
\begin{align*}
\frac{p_{\theta}(d\mathbf{x}'|x)}{\mu(d\mathbf{x}')}=
\frac{e^{-\int_{0}^{T}\lambda_{\theta}(t)dt}}{e^{-T}}
\cdot \prod_{i=1}^{\kappa}\big\{\lambda_{\theta}(\tau_i))h_{\theta}(b_i)\big\} 
\times 
\prod_{i=1}^{\kappa+1}
p_{\theta}(\mathbf{x}'(i)|x_{\tau_i-};\tau_i-\tau_{i-1}).
\end{align*}


%
%
%
%


\subsubsection*{Construct Two}

We adopt an idea used -- for a different problem -- in 
\cite{gonccalves2014exact}. 
Given $x,x'\in \mathbb{R}^{d_x}$, we define an auxiliary process $\tilde{X}_t$ as follows,
\begin{align}
d\tilde{X}_t & =\Big\{ b_{\theta}(\tilde{X}_t)+\frac{x'-J_{T}-\tilde{X}_t+J_t}{T-t}\Big\} dt+\sigma_{\theta}(\tilde{X}_t)dW_{t}+dJ_{t},\quad X_0=x,
\label{eq:auxJump}
\end{align}
so that $\tilde{X}_{T}=x'$, w.p.~1.
As with (\ref{transform}), we view (\ref{eq:auxJump})
as a transform, projecting a path, $Z$ of the Wiener process and the compound process, $J$,  onto a path, $X$, of the jump process. That is, we consider the 1--1 maps,
\begin{equation*} 
(J,Z) \mapsto X=: G_{\theta}(J,Z;x,x'), \quad (J,Z) =G^{-1}_{\theta}(X;x,x'). 
\end{equation*}
Notice that for the inverse transform, the $J$-part is obtained immediately, whereas for the $Z$-part one uses 
the expression
-- well-defined due to Assumption \ref{ass:invert} --,
\begin{align*}
dZ_{t} & =\sigma_{\theta}(X_{t})^{-1}\Big\{dX_{t}- dJ_t - b_{\theta}(X_t)dt-\frac{x'-J_T-X_{t}+J_t}{T-t}dt\Big\}.
\end{align*}
We denote by $\overline{\mathbb{P}}_{\theta,x,x'}$
the law of
the original process in (\ref{Jump-SDE}) conditionally 
on hitting $x'$ at time $T$. 
Also, we denote the distribution on pathspace 
induced by \eqref{eq:auxJump} as $\overline{\mathbb{Q}}_{\theta,x,x'}$.  
Consider the variate, 
\begin{align*}
\mathbf{x}' = ( x', J, Z ),    
\end{align*}
so that, 
\begin{equation*}
p_{\theta}(d\mathbf{x}'|x) := f_{\theta}(dx'|x)\otimes p_{\theta}\big(d(J,Z)|x',x\big). 
\end{equation*}
Due to the employed 1--1 transforms, we have that, 
\begin{align*}
\frac{p_{\theta}(d\mathbf{x}'|x)}
{\big(\mathrm{Leb}^{\otimes d_x}\otimes \mathbb{L}_{\theta}\otimes \mathbb{W}\big)(d\mathbf{x}') }  = f_{\theta}(x'|x)\times 
\frac{d\overline{\mathbb{P}}_{\theta,x,x'}}{d\overline{\mathbb{Q}}_{\theta,x,x'}}\big(G_{\theta}(J,Z;x,x')\big).
\end{align*}
Thus, using the parameter-free reference measure, 
\begin{align*}
\mu(d\mathbf{x}'):= \mathrm{Leb}^{\otimes d_x}\otimes \mathbb{L}\otimes \mathbb{W},
\end{align*}
one obtains that, 
\begin{align}
\frac{p_{\theta}(d\mathbf{x}'|x)}
{ \mu(d\mathbf{x}') }  &= f_{\theta}(x'|x)\times 
\frac{e^{-\int_{0}^{T}\lambda_{\theta}(t)dt}}{e^{-T}}
\cdot \prod_{i=1}^{\kappa}\big\{\lambda_{\theta}(\tau_i))h_{\theta}(b_i)\big\}\nonumber \\ 
& \qquad \qquad \qquad \qquad \qquad\qquad\qquad\times 
\frac{d\overline{\mathbb{P}}_{\theta,x,x'}}{d\overline{\mathbb{Q}}_{\theta,x,x'}}\big(F_{\theta}(J,Z;x,x')\big).
\label{eq:jjump}
\end{align}
\begin{rem}
\cite{dely:06} obtained the Radon-Nikodym derivative expression in  (\ref{eq:ddely}) after a great amount of rigorous analysis.
A similar development for the case of conditioned jump diffusions does not follow from their work, and can only be subject of dedicated research at the scale of a separate paper. This is beyond the scope of our work. In practice, one can proceed as follows. 
For grid size $M\ge 1$, and $\delta=T/M$, let  $\overline{\mathbb{P}}^{M}_{\theta,x,x'}(X_{\delta},\ldots, X_{(M-1)\delta}\,|\,X_0=x, X_{M\delta}=x')$
denote the time-discretised Lebesgue density of the $(M-1)$-positions of the 
conditioned diffusion with law $\overline{\mathbb{P}}_{\theta,x,x'}$.
Once (\ref{eq:jjump}) is obtained, a time-discretisation approach will give,
\begin{align*}
\overline{\mathbb{P}}^{M}_{\theta,x,x'}(X_{\delta},\ldots,& X_{(M-1)\delta}\,|\,X_0=x, X_{M\delta}=x')  \\
&=\frac{\overline{\mathbb{P}}_{\theta,x,x'}^{M}(X_{\delta},\ldots, X_{(M-1)\delta}, X_{M\delta}=x'\,|\,X_0=x)}{
\overline{\mathbb{P}}^{M}_{\theta,x,x'}
(X_{M\delta}=x'\,| X_0=x)
}.
\end{align*}
In this time-discretised setting, $f_{\theta}(x'|x)$ in (\ref{eq:ddely}) will be
replaced by $\overline{\mathbb{P}}^{M}_{\theta,x,x'}
(X_{M\delta}=x'\,| X_0=x)$. Thus, the intractable transition density over the complete time period will  cancel out, and one is left with an explicit expression to use on a PC. 
Compared to the method in Section \ref{subsec:Girsanov}, and the Construct One
in the current section, we do not have explicit theoretical evidence of a density on the pathspace. Yet, all numerical experiments we tried showed that the deduced algorithm was stable under mesh-refinement.  
We thus adopt the approach (or, conjecture) that the density in (\ref{eq:jjump}) exists -- under assumptions --, and can be obtained pending future research.
\end{rem}



\section{Forward-Only Smoothing for SDEs}
\label{sec:cont}

\subsection{Pathspace Algorithm}

We are ready to develop a forward-only particle smoothing 
method, under the scenario in (\ref{eq:exp})-(\ref{eq:sum}) on
the pathspace setting.
We will work with the pairs of random elements
\begin{align*}
(x_{k-1},\mathbf{x}_{k}), \quad 1\leq k\leq n,
\end{align*}
with $\mathbf{x}_{k}$ as defined in Section \ref{sec:SDE and Girsanov}, i.e.~containing pathspace elements,
and given by an 1--1 transform of $\{X_{s};s\in[t_{k-1},t_k]\}$, such that we can obtain 
a density for $p_{\theta}(\mathbf{x}_k|x_{k-1})$ with respect to 
a reference measure that does not involve $\theta$. Recall that  $p_{\theta}(d\mathbf{x}_k|x_{k-1})$ denotes the
probability law for the augmented variable $\mathbf{x}_k$ given 
$x_{k-1}$. 
We also write the corresponding density as, 
\begin{align*}
p_{\theta}(\mathbf{x}_k|x_{k-1}) := \frac{p_{\theta}(d\mathbf{x}_k|x_{k-1})}{\mu(d\mathbf{x}_k)}.
\end{align*}
The quantity of interest is now,
\begin{equation*}
\mathcal{S}_{\theta,n}=\int S_{\theta}(\mathbf{x}_{0:n})\,p_{\theta}(d\mathbf{x}_{0:n}|y_{0:n}), \quad n\ge 1,
\end{equation*}
for the class of additive functionals $S(\cdot)$ of the structure,
\begin{equation*}
 S_{\theta}(\mathbf{x}_{0:n})=\sum_{k=0}^{n}s_{\theta,k}(x_{k-1},\mathbf{x}_{k}),
\end{equation*} 
under the convention that $x_{-1}=\emptyset$.
Notice that we now allow $s_k(\cdot,\cdot)$ to be a function of 
$x_{k-1}$ and $\mathbf{x_k}$; thus, $s_k(\cdot,\cdot)$ can potentially correspond to integrals, or other pathspace functionals. 
We will work with a transition density on the enlarged space of $\mathbf{x}_k$.

Similarly to the discrete-time case in Section \ref{sec:online-smoothing}, we define the functional,
\begin{align*}
T_{\theta,n}(\mathbf{x}_{n}) & :=\int S_{\theta}(\mathbf{x}_{0:n})\,p_{\theta}(d\mathbf{x}_{0:n-1}| y_{0:n-1},\mathbf{x}_{n}).
\end{align*}
%
\begin{prop}
We have that,
\begin{align*}
\mathcal{S}_{\theta,n} & =\int T_{\theta,n}(\mathbf{x}_{n})\,p_{\theta}(d\mathbf{x}_{n}| y_{0:n}).
\end{align*}
\end{prop}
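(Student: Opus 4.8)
The plan is to recognize this as essentially the tower property of conditional expectation, mirroring the discrete-time identity $\mathcal{S}_{\theta,n}=\int T_{\theta,n}(x_n)\,p_\theta(dx_n\mid y_{0:n})$ recalled in Section~\ref{sec:online-smoothing}, and to verify that the only structural input needed is that $\mathbf{x}_n$ is a (measurable) function of the full augmented path up to time $t_n$. First I would write out the definition of $\mathcal{S}_{\theta,n}$ as an integral of $S_\theta(\mathbf{x}_{0:n})$ against the joint smoothing law $p_\theta(d\mathbf{x}_{0:n}\mid y_{0:n})$, and then disintegrate that joint law as
\begin{align*}
p_\theta(d\mathbf{x}_{0:n}\mid y_{0:n}) = p_\theta(d\mathbf{x}_n\mid y_{0:n})\otimes p_\theta(d\mathbf{x}_{0:n-1}\mid y_{0:n},\mathbf{x}_n).
\end{align*}

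The next step is the key simplification: I would argue that $p_\theta(d\mathbf{x}_{0:n-1}\mid y_{0:n},\mathbf{x}_n) = p_\theta(d\mathbf{x}_{0:n-1}\mid y_{0:n-1},\mathbf{x}_n)$, i.e. that conditionally on $\mathbf{x}_n$ the past augmented states $\mathbf{x}_{0:n-1}$ carry no further dependence on the final observation $y_n$. This is the pathspace analogue of the standard HMM conditional-independence property underlying \eqref{eq:Kitagawa-Recursion}: given the hidden state $\mathbf{x}_n$ (which by construction includes $x_n=X_{t_n}$), the observation $y_n$ is conditionally independent of $\mathbf{x}_{0:n-1}$, because by \eqref{eq:likelihood} the law of $Y_{t_n}$ depends only on $\mathcal{F}_n$ and $y_{n-1}$, and $\mathcal{F}_n$ together with the earlier path is what $\mathbf{x}_{0:n}$ encodes. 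One must be slightly careful that $\mathbf{x}_n$ as defined in Section~\ref{sec:SDE and Girsanov} contains the full driving-noise path on $[t_{n-1},t_n]$ (and jump data) and hence generates $\mathcal{F}_n$ up to the $\theta$-measurable $1$--$1$ transform, so the conditioning is legitimate. Substituting this identity and then integrating $S_\theta$ against the inner measure gives exactly $T_{\theta,n}(\mathbf{x}_n)$ by its definition, leaving $\mathcal{S}_{\theta,n}=\int T_{\theta,n}(\mathbf{x}_n)\,p_\theta(d\mathbf{x}_n\mid y_{0:n})$.

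Concretely I would present it as a short chain of equalities: start from $\mathcal{S}_{\theta,n}=\int S_\theta(\mathbf{x}_{0:n})\,p_\theta(d\mathbf{x}_n\mid y_{0:n})\,p_\theta(d\mathbf{x}_{0:n-1}\mid y_{0:n-1},\mathbf{x}_n)$, use Fubini to perform the $\mathbf{x}_{0:n-1}$-integral first, recognize the result as $T_{\theta,n}(\mathbf{x}_n)$, and conclude. The statement is really the direct transcription of Proposition~2.1 of \citet{del2010forward} to the enlarged state space, so no new analytic machinery is required beyond the measurability of the augmentation map.

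The main obstacle, such as it is, is purely a bookkeeping/measure-theoretic one: justifying the disintegration and the conditional-independence step $p_\theta(d\mathbf{x}_{0:n-1}\mid y_{0:n},\mathbf{x}_n)=p_\theta(d\mathbf{x}_{0:n-1}\mid y_{0:n-1},\mathbf{x}_n)$ rigorously on the infinite-dimensional pathspace, where the relevant measures live on products of Wiener space with Euclidean and marked-point-process components. One has to make sure these conditional laws are well-defined (regular conditional distributions exist, since the spaces are Polish) and that $\mathbf{x}_n$ generates the correct sub-$\sigma$-algebra so that the HMM-type factorization in \eqref{eq:likelihood}–\eqref{eq:trans} carries over verbatim. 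I would handle this by noting that Section~\ref{sec:SDE and Girsanov} has already exhibited $\mathbf{x}_k$ as a bijective bi-measurable transform of $\{X_s; s\in[t_{k-1},t_k]\}$, so the augmented chain $(\mathbf{x}_k)_k$ is itself a Markov chain on a Polish space with observation law \eqref{eq:likelihood}, and the desired identity is then the standard smoothing-recursion fact applied in that setting.
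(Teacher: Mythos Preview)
Your proposal is correct and follows essentially the same approach as the paper: both rely on the disintegration $p_\theta(d\mathbf{x}_{0:n}\mid y_{0:n})=p_\theta(d\mathbf{x}_n\mid y_{0:n})\otimes p_\theta(d\mathbf{x}_{0:n-1}\mid y_{0:n},\mathbf{x}_n)$ together with the conditional-independence identity $p_\theta(d\mathbf{x}_{0:n-1}\mid y_{0:n},\mathbf{x}_n)=p_\theta(d\mathbf{x}_{0:n-1}\mid y_{0:n-1},\mathbf{x}_n)$, and then recognise the definition of $T_{\theta,n}$. The only cosmetic difference is direction---the paper starts from $\int T_{\theta,n}(\mathbf{x}_n)\,p_\theta(d\mathbf{x}_n\mid y_{0:n})$ and arrives at $\mathcal{S}_{\theta,n}$, whereas you start from $\mathcal{S}_{\theta,n}$---and your added remarks on Polish-space regular conditional distributions are more careful than what the paper provides.
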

\begin{proof}
See Appendix \ref{proof1}
\end{proof}
\noindent Critically, as in \eqref{eq:recu}, we obtain the
following recursion.
\begin{prop} \label{prop:recursionT}
For any $n\geq1$, we have that, 
\begin{align*}
T_{\theta,n}(\mathbf{x}_{n}) & =\int\big[T_{\theta,n-1}(\mathbf{x}_{n-1})+s_{\theta,n}(x_{n-1},\mathbf{x}_{n})\big]\,p_{\theta}(d\mathbf{x}_{n-1}| y_{0:n-1},\mathbf{x}_{n})
\nonumber \\
&\equiv \frac{\int\big[T_{\theta,n-1}(\mathbf{x}_{n-1})+s_{\theta,n}(x_{n-1},\mathbf{x}_{n})\big]\,p_{\theta}(\mathbf{x}_{n}|x_{n-1})\,p_{\theta}(
d\mathbf{x}_{n-1}|y_{0:n-1})}{\int p_{\theta}(\mathbf{x}_n|x_{n-1})\,p_{\theta}(dx_{n-1}|y_{0:n-1})}.
\end{align*}
\end{prop}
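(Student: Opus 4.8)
The plan is to establish the recursion in two stages, mirroring Proposition 2.1 of \citet{del2010forward} but now on the augmented pathspace. First I would prove the first displayed identity, namely that
\[
T_{\theta,n}(\mathbf{x}_n) = \int \big[T_{\theta,n-1}(\mathbf{x}_{n-1}) + s_{\theta,n}(x_{n-1},\mathbf{x}_n)\big]\, p_{\theta}(d\mathbf{x}_{n-1}\mid y_{0:n-1},\mathbf{x}_n).
\]
To do this, start from the definition of $T_{\theta,n}(\mathbf{x}_n)$ as the integral of $S_{\theta}(\mathbf{x}_{0:n})$ against $p_{\theta}(d\mathbf{x}_{0:n-1}\mid y_{0:n-1},\mathbf{x}_n)$, split the additive functional as $S_{\theta}(\mathbf{x}_{0:n}) = S_{\theta}(\mathbf{x}_{0:n-1}) + s_{\theta,n}(x_{n-1},\mathbf{x}_n)$ using the structure in \eqref{eq:sum}, and integrate term by term. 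For the second term, only $x_{n-1}$ and $\mathbf{x}_n$ enter, so integrating out $\mathbf{x}_{0:n-2}$ collapses $p_{\theta}(d\mathbf{x}_{0:n-1}\mid y_{0:n-1},\mathbf{x}_n)$ to its $\mathbf{x}_{n-1}$-marginal. For the first term, $S_{\theta}(\mathbf{x}_{0:n-1})$ does not depend on $\mathbf{x}_n$, so I would factorise $p_{\theta}(d\mathbf{x}_{0:n-1}\mid y_{0:n-1},\mathbf{x}_n) = p_{\theta}(d\mathbf{x}_{0:n-2}\mid y_{0:n-1},\mathbf{x}_{n-1})\otimes p_{\theta}(d\mathbf{x}_{n-1}\mid y_{0:n-1},\mathbf{x}_n)$ — this is the key conditional-independence step — and recognise the inner integral as $T_{\theta,n-1}(\mathbf{x}_{n-1})$ by its definition, provided one checks that conditioning additionally on $\mathbf{x}_n$ leaves $p_{\theta}(d\mathbf{x}_{0:n-2}\mid y_{0:n-1},\mathbf{x}_{n-1})$ unchanged. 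That last point is exactly the Markov structure of the augmented chain: given $\mathbf{x}_{n-1}$ (hence $x_{n-1}$), the past path variables $\mathbf{x}_{0:n-2}$ are independent of $\mathbf{x}_n$, and $y_{0:n-1}$ carries no further information beyond $y_{0:n-1}$ already conditioned on. I would state this as a short lemma or verify it inline via the joint density factorisation.

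Second, I would convert the conditional law $p_{\theta}(d\mathbf{x}_{n-1}\mid y_{0:n-1},\mathbf{x}_n)$ into the explicit Bayes-type ratio appearing in the second line. Here one writes, by Bayes' rule applied to the pair $(\mathbf{x}_{n-1},\mathbf{x}_n)$ conditionally on $y_{0:n-1}$,
\[
p_{\theta}(d\mathbf{x}_{n-1}\mid y_{0:n-1},\mathbf{x}_n) = \frac{p_{\theta}(\mathbf{x}_n\mid \mathbf{x}_{n-1})\, p_{\theta}(d\mathbf{x}_{n-1}\mid y_{0:n-1})}{\int p_{\theta}(\mathbf{x}_n\mid x_{n-1})\, p_{\theta}(dx_{n-1}\mid y_{0:n-1})},
\]
and then use that $p_{\theta}(\mathbf{x}_n\mid\mathbf{x}_{n-1})$ depends on $\mathbf{x}_{n-1}$ only through its endpoint $x_{n-1}$ — this is precisely the content of Section~\ref{sec:SDE and Girsanov}, where $p_{\theta}(\mathbf{x}_k\mid x_{k-1})$ was constructed as a density (with respect to a $\theta$-free reference measure $\mu$) depending on $x_{k-1}$ alone, see \eqref{eq:finally}. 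Substituting this into the first displayed identity and pulling the common normalising constant out of both the $T_{\theta,n-1}$ and the $s_{\theta,n}$ integrals yields the claimed quotient expression. The well-posedness of the ratio (finiteness and positivity of the denominator) follows from the tractability remark after \eqref{eq:finally} and should be noted but requires no real work under the standing assumptions.

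The main obstacle, and the place where I would be most careful, is the conditional-independence factorisation in the first stage: specifically, justifying that $p_{\theta}(d\mathbf{x}_{0:n-2}\mid y_{0:n-1}, \mathbf{x}_{n-1}, \mathbf{x}_n) = p_{\theta}(d\mathbf{x}_{0:n-2}\mid y_{0:n-1},\mathbf{x}_{n-1})$. This is morally the Markov property of the augmented signal chain $\{\mathbf{x}_k\}$ together with the fact that $\mathbf{x}_n$ is (in the smoothing-free direction) a future variable relative to $\mathbf{x}_{0:n-1}$, so it contributes no information about the past once $\mathbf{x}_{n-1}$ is known — and that $y_n$ has not yet been conditioned on in $T_{\theta,n-1}$. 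One has to be a little cautious because $\mathbf{x}_k$ bundles a continuous path segment with an endpoint and, in the jump case, a variable-dimensional component $(J, \{x_{\tau_i-}\}, \{Z(i)\})$; I would argue that the construction of Section~\ref{sec:SDE and Girsanov} makes $\{(\mathbf{x}_k, y_k)\}$ a bona fide HMM, with transition $p_{\theta}(d\mathbf{x}_k\mid x_{k-1})$ and emission $g_{\theta}(dy_k\mid y_{k-1},\mathcal{F}_k)$ — here the dependence of $g_\theta$ on $\mathcal{F}_k$ is harmless since $\mathcal{F}_k$ is $\mathbf{x}_k$-measurable — and then the required factorisations are the standard HMM conditional independences, so that the proof of \citet[Prop.~2.1]{del2010forward} transfers verbatim with $x_k$ replaced by $\mathbf{x}_k$ and $f_\theta(x_k\mid x_{k-1})$ by $p_\theta(\mathbf{x}_k\mid x_{k-1})$. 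I would relegate the routine verification of these HMM-type identities to Appendix~\ref{proof1}, alongside the proof of the preceding proposition.
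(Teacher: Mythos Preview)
Your proposal is correct and follows essentially the same route as the paper's proof: both rely on the additive split $S_\theta(\mathbf{x}_{0:n})=S_\theta(\mathbf{x}_{0:n-1})+s_{\theta,n}(x_{n-1},\mathbf{x}_n)$, the HMM conditional-independence identity $p_\theta(d\mathbf{x}_{0:n-2}\mid y_{0:n-2},\mathbf{x}_{n-1})\equiv p_\theta(d\mathbf{x}_{0:n-2}\mid y_{0:n-1},\mathbf{x}_{n-1},\mathbf{x}_n)$, and then Bayes' rule for the second line. The only cosmetic difference is direction---the paper starts from the right-hand side and collapses the measures, while you start from the definition of $T_{\theta,n}$ and factorise; the conditional-independence ingredient you flag as the ``main obstacle'' is exactly the one-line identity the paper invokes.
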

\begin{proof}
See Appendix \ref{proof2}.
\end{proof}

\noindent Proposition \ref{prop:recursionT} gives rise to a Monte-Carlo methodology for a forward-only, online approximation of the smoothing expectation of interest. This is given in Algorithm \ref{alg:Forward-PF-infinite}.
\begin{algorithm}[!h]
\caption{Online Forward-Only Particle Smoothing on
Pathspace}
\label{alg:Forward-PF-infinite} 
\begin{enumerate}
\item[(i)]
Initialise the particles $\{x_{0}^{(i)},W_{0}^{(i)}\}_{i=1}^{N}$, with  $x_{0}^{(i)}\stackrel{i.i.d.}{\sim} p_{\theta}(dx_0)$, 
$W_{0}^{(i)} \propto g_{\theta}(y_0|x_0^{(i)})$, and functionals 
$\widehat{T}_{\theta,0}(x_0^{(i)})= s_{\theta,0}(\mathbf{x}_0^{(i)})$,
where $\mathbf{x}_0^{(i)}\equiv x_0^{(i)}$, $1\le i\le N$. 
\item[(ii)] Assume that at time $n-1$, one has a particle approximation $\{\mathbf{x}_{n-1}^{(i)},W_{n-1}^{(i)}\}_{i=1}^{N}$
of the filtering law $p_{\theta}(d\mathbf{x}_{n-1}| y_{0:n-1})$ and  estimators $\widehat{T}_{\theta,n-1}(\mathbf{x}_{n-1}^{(i)})$
of $T_{\theta,n-1}(\mathbf{x}_{n-1}^{(i)})$, $1\leq i\leq N$.
\item[(iii)] At time $n$, sample $\mathbf{x}_{n}^{(i)}$, for $1\leq i \leq  N$, from 
\begin{align*}
\mathbf{x}_{n}^{(i)}  \sim \widehat{p}_{\theta}(d\mathbf{x}_n|y_{0:n-1}) = \sum_{j=1}^{N}W_{n-1}^{(j)}p_{\theta}(d\mathbf{x}_{n}|x_{n-1}^{(j)}),
\end{align*}
and assign particle weights $W_n^{(i)}\propto g_{\theta}(y_n|y_{n-1},\mathcal{F}_n^{(i)})$, $1\le i \le N$.
\item[(iv)] Then set, for $1\leq i\leq N$, 
\begin{align*}
\widehat{T}_{\theta,n}(\mathbf{x}_{n}^{(i)}) & =\frac{\sum_{j=1}^{N}W_{n-1}^{(j)}\,p_{\theta}(\mathbf{x}_{n}^{(i)}| x_{n-1}^{(j)})}{\sum_{l=1}^{N}W_{n-1}^{(l)}\,p_{\theta}(\mathbf{x}_{n}^{(i)}| x_{n-1}^{(l)})}\,\big[\widehat{T}_{\theta,n-1}(\mathbf{x}_{n-1}^{(j)})+s_{\theta,n}(x_{n-1}^{(j)},\mathbf{x}_{n}^{(i)})\big].
\end{align*}
\item[(v)] Obtain an estimate of $\mathcal{S}_{\theta,n}$ as, 
\begin{align*}
\widehat{\mathcal{S}}_{\theta,n} & =\sum_{i=1}^{N}W_{n}^{(i)}\,\widehat{T}_{\theta,n}(\mathbf{x}_{n}^{(i)}).
\end{align*}
\end{enumerate}
\end{algorithm}

\begin{rem}
Algorithm \ref{alg:Forward-PF-infinite} uses a simple bootstrap filter with multinomial resampling applied at each step. 
The variability of the Monte-Carlo estimates can be further reduced by incorporating: more effective resampling (e.g., systematic resampling \citep{carpenter1999improved}, stratified resampling \citep{kitagawa1996monte});
dynamic resampling via use of Effective Sample Size; non-blind proposals in the propagation of the particles.

\end{rem}

\subsection{Pathspace versus Finite-Dimensional Construct}
One can attempt to define an algorithm without reference to the underlying pathspace. That is, in the case of no jumps (for simplicity)
an alternative approach can involve working with  a regular grid
on the period $[0,T]$, say $\{s_j=j\delta\}_{j=0}^{M}$, with $\delta=T/M$ for chosen size $M\ge 1$. Then, defining 
$\mathbf{x}'=(x_{\delta},x_{2\delta},\ldots, x_{M\delta})$,
and using, e.g., an Euler-Maruyama time-discretisation scheme to obtain the joint density of such an $\mathbf{x}'$ given $x=x_0$,
a Radon-Nikodym derivative, $p_{\theta}^{M}(\mathbf{x}'|x)$ on $\mathbb{R}^{M\times d_x}$, can be obtained with respect to the Lebesgue reference measure $\mathrm{Leb}^{\otimes (d_x\times M)}$, as a product of $M$ conditionally Gaussian densities. 
As shown e.g.~in the motivating example in the Introduction, such an approach would lead to estimates with variability that increases rapidly with $M$, for fixed Monte-Carlo particles $N$.
%
A central argument in this work is that one should develop the algorithm in a manner that respect the probabilistic properties of the SDE pathspace, before applying (necessarily) a time-discretisation for implementation on a PC. This procedure is not followed for purposes of mathematical rigour, but it has practical effects on algorithmic performance.

\subsection{Consistency}

For completeness, we provide an asymptotic result for Algorithm \ref{alg:Forward-PF-infinite} following standard results from the literature. We consider the following assumptions.

\begin{ass}
\label{assu: bound for G}
Let $\mathbf{X}$ and $\mathsf{X}$
denote the state spaces of $\mathbf{x}$ and $x$ respectively.
\begin{enumerate}
\item[(i)] For any relevant $y'$, $y$, $\mathcal{F}$, $x'$, we have that $g_{\theta}(y'| y,\mathcal{F})\equiv g_{\theta}(y'|x')$, 
where the latter 
is a positive function such that, for any $y$, $\sup_{x\in\mathsf{X}} g_{\theta}(y| x)<\infty$.
\item[(ii)] $\sup_{(x,\mathbf{x}')\in X\times \mathsf{\mathbf{X}}} p_{\theta}(\mathbf{x}'| x)<\infty$.
\end{enumerate}
\end{ass}

\begin{prop}
\label{prop:Under-,-for}
\begin{enumerate}
\item[(i)] Under Assumption \ref{assu: bound for G}, for any $n\geq0$, there exist constants
$b_{n},c_{n}>0$, such that for any $\epsilon>0$,
\begin{align*}
\mathrm{Prob}\big[\,|\mathcal{S}_{\theta,n}-\hat{\mathcal{S}}_{\theta,n}|>\epsilon\,\big] \leq b_{n}e^{-c_{n}N\epsilon^{2}}.
\end{align*}
\item[(ii)] For any $n\geq0$, $\hat{\mathcal{S}}_{\theta,n}\rightarrow\mathcal{S}_{\theta,n}$
w.p.1, as $N\rightarrow\infty$.
\end{enumerate}
\end{prop}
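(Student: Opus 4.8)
The plan is to reduce Proposition \ref{prop:Under-,-for} to a known concentration result for bootstrap particle filters applied to a suitably reformulated discrete-time HMM, and then upgrade the functional estimator $\widehat{\mathcal S}_{\theta,n}$ via an induction on the forward-only recursion of Algorithm \ref{alg:Forward-PF-infinite}. First I would observe that, by the construction in Section \ref{sec:SDE and Girsanov}, the augmented signal $\{\mathbf{x}_k\}_{k\ge 0}$ is itself a discrete-time Markov chain: it has initial law $p_\theta(d\mathbf{x}_0)$ and transition kernel $p_\theta(d\mathbf{x}_k\mid x_{k-1})=p_\theta(\mathbf{x}_k\mid x_{k-1})\,\mu(d\mathbf{x}_k)$, where $x_{k-1}$ is the (deterministic) terminal-coordinate functional of $\mathbf{x}_{k-1}$. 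Under Assumption \ref{assu: bound for G}(i), the observation density depends on $\mathbf{x}_k$ only through $x_k$ and is bounded in $x$ for each fixed $y$; under Assumption \ref{assu: bound for G}(ii), the transition density $p_\theta(\mathbf{x}'\mid x)$ with respect to the parameter-free reference measure $\mu$ is uniformly bounded. Hence $(\mathbf{x}_k,y_k)$ is a bona fide HMM on the (Polish) pathspace $\mathbf{X}$ to which the standard $L_p$-inequalities and exponential deviation bounds for bootstrap filters apply — e.g.\ the results collected in \citet{del2010forward}, \citet{cappe2005inference}, or \citet{kantas2015particle}. In particular, for any bounded measurable $\phi$ on $\mathbf{X}$ there are constants $b_n,c_n>0$ with $\mathrm{Prob}[\,|\widehat p_\theta(\phi\mid y_{0:n})-p_\theta(\phi\mid y_{0:n})|>\epsilon\,]\le b_n e^{-c_n N\epsilon^2}$, and likewise a.s.\ convergence as $N\to\infty$.

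Next I would handle the additive functional estimator. The subtlety is that $\widehat{\mathcal S}_{\theta,n}=\sum_i W_n^{(i)}\widehat T_{\theta,n}(\mathbf{x}_n^{(i)})$ is \emph{not} simply a filter estimate of a fixed test function: $\widehat T_{\theta,n}$ is itself random, built recursively through step (iv). I would proceed by induction on $n$. The base case $n=0$ is immediate since $\widehat T_{\theta,0}=s_{\theta,0}$ is deterministic and bounded (assuming, as is implicit, that the increments $s_{\theta,k}$ are bounded — I would state this as the operative regularity hypothesis, matching the standing assumptions in \citet{del2010forward}). For the inductive step, I would write the one-step error as the sum of (a) the error incurred by replacing the exact backward kernel in \eqref{eq:recu}/Proposition \ref{prop:recursionT} by its particle approximation — this is a ratio of filter estimates of bounded functions, controlled by Assumption \ref{assu: bound for G} together with a lower bound on the denominator that holds on an event of overwhelming probability — and (b) the propagated error $\widehat T_{\theta,n-1}-T_{\theta,n-1}$ from the previous step. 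Since $T_{\theta,n}$ is bounded (a finite sum of bounded $s_{\theta,k}$), the recursion is a contraction-type update, and combining the two error contributions with a union bound over the finitely many time steps $0,\dots,n$ yields constants $b_n,c_n$ of the asserted form; part (ii) follows either by Borel–Cantelli from part (i) or directly by the a.s.\ convergence of each filter estimate.

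The main obstacle, and the step I would spend the most care on, is item (a): controlling the self-normalised backward weights $W_{n-1|n}^{(i)}\propto\sum_j W_{n|n}^{(j)} p_\theta(\mathbf{x}_n^{(j)}\mid x_{n-1}^{(i)})W_{n-1}^{(i)}/\sum_l p_\theta(\mathbf{x}_n^{(j)}\mid x_{n-1}^{(l)})W_{n-1}^{(l)}$, because the denominator $\sum_l p_\theta(\mathbf{x}_n^{(j)}\mid x_{n-1}^{(l)})W_{n-1}^{(l)}$ can in principle be small. Assumption \ref{assu: bound for G}(ii) gives only an upper bound on $p_\theta(\cdot\mid\cdot)$, so I would need an additional mild nondegeneracy input — either a strictly positive lower bound on $p_\theta(\mathbf{x}'\mid x)$ over the relevant compact-ish region, or the observation that the denominator concentrates around $\int p_\theta(\mathbf{x}_n\mid x_{n-1})p_\theta(dx_{n-1}\mid y_{0:n-1})>0$ (the exact predictive density, which is the normaliser appearing in Proposition \ref{prop:recursionT}) and is therefore bounded away from $0$ with probability at least $1-b'e^{-c'N\epsilon^2}$. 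I would adopt the latter route, so that the argument rests only on the already-assumed boundedness plus the strict positivity of the exact predictive normaliser; this keeps the hypotheses aligned with Assumption \ref{assu: bound for G} and avoids imposing the strong mixing (Doeblin) conditions that the Introduction explicitly advertises the method as \emph{not} requiring. Everything else is the routine bookkeeping of propagating $L_p$/exponential bounds through finitely many arithmetic operations, which I would not spell out in full.
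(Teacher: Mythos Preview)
Your approach is correct and, for part (ii), identical to the paper's: Borel--Cantelli applied to the exponential bound from part (i). For part (i), the paper does not spell out an argument at all; it simply states that the bound ``follows via the same arguments as in \citet[Corollary 2]{olsson2017efficient}, based on \citet{azuma1967weighted} and \citet[Lemma 4]{douc2011sequential}.'' What you have sketched --- recasting the augmented signal as a discrete-time HMM on pathspace, then propagating Hoeffding/Azuma-type deviation bounds through the forward-only recursion by induction on $n$, handling the self-normalised backward weights as a ratio of particle estimates --- is precisely the content of the cited result, so in substance your route and the paper's coincide. The difference is purely in presentation: you reconstruct the argument, the paper outsources it.

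One remark: your concern about the denominator $\sum_l W_{n-1}^{(l)} p_\theta(\mathbf{x}_n^{(j)}\mid x_{n-1}^{(l)})$ is well placed. Assumption \ref{assu: bound for G} indeed supplies only an upper bound on $p_\theta(\cdot\mid\cdot)$, and the paper does not discuss this point; it is absorbed into the citation. The result in \citet{olsson2017efficient} in fact also assumes a uniform positive lower bound on the transition density (their strong mixing condition), so strictly speaking the paper's Assumption \ref{assu: bound for G} is weaker than what the cited corollary requires. Your alternative --- controlling the denominator via concentration around the strictly positive exact predictive normaliser --- is the more economical route under the stated hypotheses, and arguably closes a small gap the paper leaves open.
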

\begin{proof}
See Appendix \ref{proof3}.
\end{proof}

\section{Online Parameter/State Estimation for SDEs}
\label{sec:online}

In this section, we  derive an online gradient-ascent 
for partially observed SDEs. 
%
%
\citet{poyi:11} use the score function estimation methodology to
propose an online gradient-ascent algorithm for obtaining an
MLE-type parameter estimate, following ideas in \citet{legland1997recursive}.
In more detail, the method is based on the  Robbins-Monro-type of recursion,
\begin{align}
\theta_{n+1} & =\theta_{n}+\gamma_{n+1}\nabla\log p_{\theta_{0:n}}(y_{n}| y_{0:n-1})
\nonumber
 \\ &=\theta_{n}+\gamma_{n+1}\big\{\nabla\log p_{\theta_{0:n}}(y_{0:n})-\nabla\log p_{\theta_{0:n-1}}(y_{0:n-1})\big\}
\label{eq:recursive MLE}
\end{align}
where $\{\gamma_{k}\}_{k}$ is a positive decreasing sequence with,
\begin{align*}
\sum_{k=1}^{\infty}\gamma_{k}=\infty, \quad \sum_{k=1}^{\infty}\gamma_{k}^{2}<\infty.
\end{align*}
The meaning of quantity $\nabla\log p_{\theta_{0:n}}(y_{0:n})$ is that -- given a recursive method (in $n$) for 
the estimation of $\theta\mapsto\nabla\log p_{\theta}(y_{0:n})$
as we describe below and based on the methodology of Algorithm~\ref{alg:Forward-PF-infinite} -- one uses $\theta_{n-1}$ 
when incorporating $y_{n-1}$, then $\theta_n$ for $y_{n}$, 
and similarly for $k>n$. See \citet{legland1997recursive,tadic2018asymptotic} for analytical
studies of the convergence properties of the deduced algorithm, where 
under strong conditions the recursion
is shown to converge to the `true' parameter value, say $\theta^{\dagger}$, as $n\rightarrow\infty$.

Observe that, from Fisher's
 identity (see e.g.~\cite{poyi:11}) we have that,
\begin{align*}
\nabla\log p_{\theta}(y_{0:n}) & =\int\nabla\log p_{\theta}(\mathbf{x}_{0:n},y_{0:n})p_{\theta}(d\mathbf{x}_{0:n}|y_{0:n}).
\end{align*}
Thus, in the context of Algorithm \ref{alg:Forward-PF-infinite},  estimation of the score function corresponds to
the choice, 
\begin{align*}
S_{\theta}(\mathbf{x}_{0:n}) = \nabla\log p_{\theta}(\mathbf{x}_{0:n},y_{0:n})\equiv \sum_{k=0}^{n} \nabla\log p_{\theta}(\mathbf{x}_{k},y_{k}|x_{k-1}).
\end{align*}
and, 
\begin{align}
s_{\theta,k}(x_{k-1},\mathbf{x}_k) \equiv  \nabla\log p_{\theta}(\mathbf{x}_{k},y_{k}|x_{k-1}).
\label{eq:incscore}
\end{align}
Combination of the Robins-Morno recursion (\ref{eq:recursive MLE}) with the one in Algorithm \ref{alg:Forward-PF-infinite}, 
delivers Algorithm~\ref{alg:OnlineMLE}, which we have presented  here 
in some detail for clarity.

\begin{algorithm}[!h]
\caption{Online Gradient-Ascent for SDEs via 
Forward-Only Smoothing}
\label{alg:OnlineMLE}
\begin{enumerate}
\item[(i)] Assume that at time $n\ge 0$, one has current parameter estimate $\hat{\theta}_{n}$, 
 particle approximation $\{\mathbf{x}_{n}^{(i)},W_{n}^{(i)}\}_{i=1}^{N}$
of the `filter' $p_{\hat{\theta}_{0:n}}(d\mathbf{x}_{n}| y_{0:n})$ and estimators $\widehat{T}_{\hat{\theta}_{0:n},n}(\mathbf{x}_{n}^{(i)})$
of $T_{\hat{\theta}_{0:n},n}(\mathbf{x}_{n}^{(i)})$, for $1\leq i\leq N$.
\item[(ii)] Apply the iteration,
\begin{align*}
\hat{\theta}_{n+1} =\hat{\theta}_{n}+\gamma_{n+1}\big\{\nabla\log p_{\hat{\theta}_{0:n}}(y_{0:n})-\nabla\log p_{\hat{\theta}_{0:n-1}}(y_{0:n-1})\big\}
\end{align*}

\item[(iii)] At time $n+1$, 
sample $\mathbf{x}_{n+1}^{(i)}$, for $1\leq i \leq  N$, from the mixture, 
\begin{align*}
\mathbf{x}_{n+1}^{(i)}  \sim \widehat{p}_{\hat{\theta}_{n+1}}(d\mathbf{x}_{n+1}|y_{0:n}) = \sum_{j=1}^{N}W_{n}^{(j)}\,p_{\hat{\theta}_{n+1}}(\mathbf{x}_{n+1}|x_{n}^{(j)}),
\end{align*}
and assign particle weights $W_{n+1}^{(i)}\propto g_{\hat{\theta}_{n+1}}(y_{n+1}|y_n,\mathcal{F}_{n+1}^{(i)})$, $1\le i \le N$.
\item[(iii)] Then set, for $1\leq i\leq N$, 
\begin{align*}
\widehat{T}_{\hat{\theta}_{0:n+1}, n+1}(\mathbf{x}_{n+1}^{(i)}) & =\frac{\sum_{j=1}^{N}W_{n}^{(j)}\,p_{\theta}(\mathbf{x}_{n+1}^{(i)}| x_{n}^{(j)})}{\sum_{l=1}^{N}W_{n}^{(l)}\,p_{\theta}(\mathbf{x}_{n+1}^{(i)}| x_{n}^{(l)})}\,\big[\widehat{T}_{\hat{\theta}_{0:n},n}(\mathbf{x}_{n}^{(j)})+s_{\theta,n}(x_{n}^{(j)},\mathbf{x}_{n+1}^{(i)})\big],
\end{align*}
where, on the right-hand-side we use the parameter, 
\begin{align*}
\theta = \hat{\theta}_{n+1}.
\end{align*}
\item[(iv)] Obtain the estimate, 
\begin{align*}
\widehat{\mathcal{S}}_{\hat{\theta}_{n+1},n+1} & =\sum_{i=1}^{N}W_{n+1}^{(i)}\,\widehat{T}_{\hat{\theta}_{0:n+1}, n+1}(\mathbf{x}_{n+1}^{(i)}).
\end{align*}

\end{enumerate}
\end{algorithm}

\begin{rem}
When the joint density of $(\mathbf{x}_{0:n},y_{0:n})$ is in the exponential family, an online EM algorithm can also be developed;
see \cite{del2010forward} for the discrete-time case.
\end{rem}

\section{Numerical Applications}
\label{sec:numerics}
All numerical experiments shown in this section were obtained on a standard PC, with code written in \texttt{Julia}. Unless otherwise stated, executed algorithms discretised the pathspace using the Euler--Maruyama scheme with $M=10$ imputed points between consecutive observations, the latter assumed to be separated by 1 time unit. Algorithmic costs scale linearly with $M$.

To specify the schedule for the scaling parameters $\{\gamma_{k}\}$
in \eqref{eq:recursive MLE}, we use the standard adaptive method termed \texttt{ADAM}, by \cite{kingma2014adam}. Assume that after $n$ steps, we have 
$c_{n}:=-(\nabla\log p_{\hat{\theta}_{0:n}}(y_{0:n})-\nabla\log p_{\hat{\theta}_{0:n-1}}(y_{0:n-1}))$.
\texttt{ADAM} applies the following iterative procedure,
\begin{gather*}
m_{n}=m_{n-1}\beta_{1}+(1-\beta_{1})c_{n}, \quad  v_{n}=v_{n-1}\beta_{2}+(1-\beta_{2})c_{n}^{2},\\
\hat{m}_{n}=m_{n}/(1-\beta_{1}^{n}),\quad \hat{v}_{n}=v_{n}/(1-\beta_{2}^{n}),\\
\hat{\theta}_{n+1}=\hat{\theta}_{n}-\alpha\hat{m}_{n}/(\sqrt{\hat{v}_{n}}+\epsilon),
\end{gather*}
where $(\beta_{1},\beta_{2},\alpha,\epsilon)$ are tuning parameters.
Convergence properties of \texttt{ADAM} have been
widely studied \citep{goodfellow2016deep,kingma2014adam,reddi2019convergence}. Following \citet{kingma2014adam}, in all cases  
below we set   $(\beta_{1},\beta_{2},\alpha,\epsilon)=(0.9,0.999,0.001,10^{-8})$.
\texttt{ADAM} is nowadays a standard and very effective addition to the type of recursive inference algorithms we are considering here, even more so as for increasing dimension of unknown parameters. See the above references for more motivation
and details.


%
\subsection{Ornstein-Uhlenbeck SDE}
We consider the model,
\begin{gather}
dX_{t}  =\theta_{1}(\theta_2-X_{t})+\theta_{3}dW_{t}+dJ_t,\quad X_0=0,\nonumber \\
y_{i}=x_{i}+\epsilon_{i}, \quad i\ge 1, \quad \epsilon_{i}\stackrel{i.i.d.}{\sim}\mathcal{N}(0,0.1^{2}), \quad \Delta=1,
\label{eq:stability}
\end{gather}
where $J_{t}=\sum_{i=1}^{N_{t}}\xi_{i}$, 
$N_{t}$ is a Poisson process of intensity $\lambda\ge 0$,
and $\xi_{i}\overset{i.i.d.}{\sim}\mathcal{U}\left(-\zeta,\zeta\right)$, 
$\zeta> 0$. 

\textbf{O--U: Experiment 1.} 
Here, we choose $\lambda=0$, so the score function is
analytically available; we also fix $\theta_2=0$. 
We simulated datasets of size $n= 2,500, 5,000, 7,500, 10,000$ under true remaining parameter values $(\theta_1^{\dagger},\theta_3^{\dagger}) =(0.4,0.5)$.
We executed Algorithm \ref{alg:Forward-PF-infinite}
to approximate the score function with  $N=50, 100, 150$ particles,
for all above datasets.
Fig.~\ref{fig:stability} summarises results of $R=50$ replicates 
of estimates of the bivariate score function, with the dashed lines showing the true values.

%
%

\textbf{O--U: Experiment 2.} 
We now present two scenarios: 
one still without jumps (i.e., $\lambda=0$); and one with jumps, such that we 
fix $\lambda=0.5$, $\zeta = 0.5$.
In both scenarios, the parameters to be estimated are $\theta=(\theta_1,\theta_2,\theta_3)$.
Beyond the mentioned fixed values for $\lambda$, $\zeta$, 
in both scenarios we generated $n=20,000$ data points using the true parameters values
$\theta^{\dagger}=(\theta_{1}^{\dagger},\theta_{2}^{\dagger},\theta_{3}^{\dagger})=(0.2,0.0,0.2)$.
We applied Algorithm \ref{alg:OnlineMLE} with $N=100$ particles, and initial parameter values $\hat{\theta}_0=(1.0,1.0,1.0)$ for both scenarios. We used Construct 1 for the scenario with jumps.
Fig.~\ref{fig:OUresult} summarises the results. 

\textbf{O--U: Experiment 3.} 
To compare the efficiency of Constructs 1 and 2 for the case of jump processes, as developed in Section \ref{subsec:Jump-Diffusions},
we applied Algorithm \ref{alg:Forward-PF-infinite} -- for each case -- 
to approximate the score function (at the true values for $(\theta_{1},\theta_{2},\theta_{3})$), with $N=50,100,150,200$ particles,
with $n=10$ data generated from O--U process \eqref{eq:stability} 
with fixed $\lambda=0.5$, $\zeta=0.5$, and true remaining parameters
$(\theta_{1}^{\dagger},\theta_{2}^{\dagger},\theta_{3}^{\dagger})=(0.3,0.0,0.2)$.
Fig.~\ref{fig:Boxplots-of-estimated} shows boxplots summarising the results 
from $R=50$ replications of the obtained estimates.
We note that performance of Construct 2 is, in this 
case, much better than that of Construct 1.

\begin{figure}[!h]
\begin{centering}
\vspace{-0.3cm}
\includegraphics[scale=0.6]{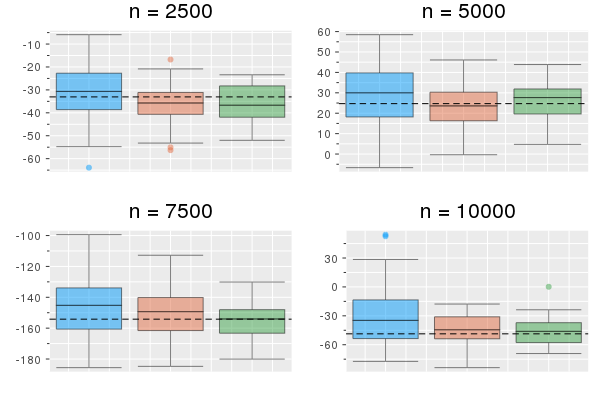}
\par\end{centering}
\vspace{-0.6cm}
\caption{\emph{O--U: Experiment 1:} Boxplots of estimated score functions of $\theta_{1}$ at  \mbox{$\theta_1=0.4$}, over $R=50$ experiment replications, 
for the O--U model \eqref{eq:stability}, without jumps ($\lambda=0$), fixed $\theta_2=0$, and free parameters $(\theta_1, \theta_3)$, with
true values
 $(\theta_{1}^{\dagger},\theta_{3}^{\dagger})=(0.4,0.5)$ used for data generation.
The left, middle and right boxplots in each panel correspond to values $N=50, 100, 150$, respectively. The horizontal dashed lines show the correct values $(-33.1,24.7,-154.2,-48.7)$
for datasizes $n=2,500, 5,000, 7,500, 10,000$, respectively.}
\label{fig:stability}
\end{figure}

\begin{figure}[!h]
\begin{centering}
\includegraphics[scale=0.6]{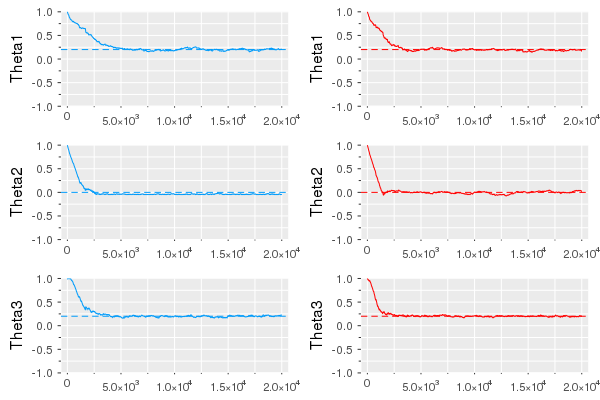}
\par\end{centering}
\caption{\emph{O--U: Experiment 2:} Trajectories from execution of Algorithm \ref{alg:OnlineMLE} for the estimation of parameters $(\theta_1,\theta_2,\theta_3)$
of the O--U model (\ref{eq:stability}).
We used 
 $N=100$ particles and initial value $\hat{\theta}_0=(1.0,1.0,1.0)$.
\emph{Left Panel:} results for O--U model with
jumps ($\lambda=0.5$, $\zeta=0.5$). \emph{Right Panel:} Results for O--U model without jumps ($\lambda=0$). The
horizontal dashed lines in the plots show the true parameter values
$(\theta_{1}^{\dagger},\theta_{2}^{\dagger},\theta_{3}^{\dagger})=(0.2,0.0,0.2)$
used in both scenarios.}
\label{fig:OUresult}
\end{figure}

\begin{figure}[!h]
\begin{centering}
\includegraphics[scale=0.6]{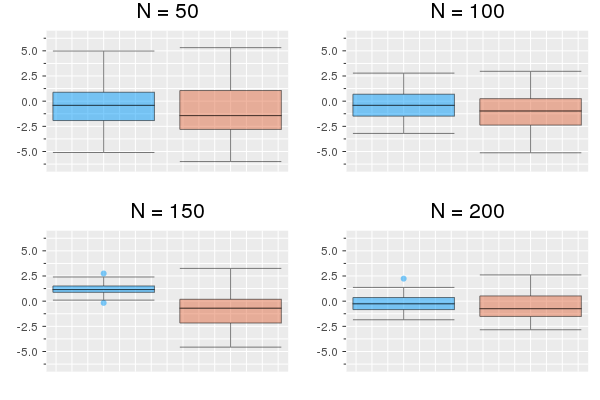}
\par\end{centering}
\vspace{-0.3cm} 
\caption{\emph{O--U: Experiment 3:} Boxplots of estimated score function
of $\theta_{1}$, at $\theta_1=0.3$, over $R=50$ experiment replications, for O--U model \eqref{eq:stability},
with fixed $\lambda=0.5$, $\zeta=0.5$, and true parameters 
$(\theta_{1}^{\dagger},\theta_{2}^{\dagger},\theta_{3}^{\dagger})=(0.3,0.0,0.2)$,
used to generate $n=10$ observations. Boxplots on the right (left) of each panel correspond to 
Construct~1 (Construct~2).}
\label{fig:Boxplots-of-estimated}
\end{figure}

\subsection{Periodic Drift SDE}
\label{subsec:per}

We consider the (highly) nonlinear
model,
\begin{gather}
dX_{t} =\sin\left(X_{t}-\theta_{1}\right)dt+\theta_{2}dW_{t},\quad X_0=0,
\nonumber\\
y_{i}=x_i+\epsilon_{i}, \quad i\ge 1, \quad \epsilon_{i}\stackrel{i.i.d.}{\sim}\mathcal{N}(0,0.1^{2}), \quad \Delta=1, 
\label{eq:sine process}
\end{gather}
for $\theta_{1}\in[0,2\pi)$, $\theta_2>0$.
We used true values $(\theta_{1}^{\dagger},\theta_{2}^{\dagger})=(\pi/4,0.9)$,  
generated $n=10^4$ data, and applied Algorithm \ref{alg:OnlineMLE} with $N=100$ particles and initial
values $(\hat{\theta}_{1,0}, \hat{\theta}_{2,0}) =(0.1,2)$. To illustrate the mesh-free performance of Algorithm \ref{alg:OnlineMLE}, we choose two values, $M=10$ and $M=100$ and compare the obtained numerical results. Figure \ref{fig:Sine-result} shows the data (Left Panel) and the trajectory of the estimated parameter values (Right Panel). Note that the SDE in (\ref{eq:sine process}) is making infrequent jumps between a number of modes.  Indeed, in this case, the results are stable when  increasing $M$ from $10$ to $100$.

\begin{figure}[!h]
\centering{}
\vspace{-0.6cm}
\includegraphics[scale=0.7]{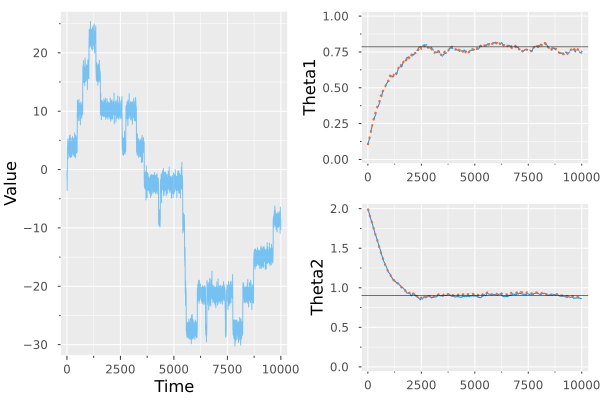}
\vspace{-0.3cm}
\caption{\emph{Left Panel:} Data generated from 
model (\ref{eq:sine process}) for true values
$\theta^{\dagger}=(\pi/4,0.9)$. \emph{Right Panel:}
Trajectories obtained  from Algorithm \ref{alg:OnlineMLE},
with 
$N=100$ and initial values $(\hat{\theta}_{1,0},\hat{\theta}_{2,0})=(0.1,2)$. Solid lines show the results for $M=10$ and dotted lines for $M=100$.}
\label{fig:Sine-result}
\end{figure}

%

\subsection{Heston Model}

Consider the Heston model \citep{heston1993closed},
for price $S_t$ and volatility $X_t$, modelled as,  
\begin{align*}
dS_{t} &=\theta_{4}S_{t}dt+S_{t}X_{t}^{1/2}dB_{t}, \\
dX_{t}  &=\theta_{1}(\theta_{2}-X_{t})dt+\theta_{3}X_{t}^{1/2}dW_{t},\quad 
X_0= \theta_2,
\end{align*}
where $B_t$, $W_t$ are independent standard Brownian motions. 
Define the log-price
$U_{t}=\log(S_{t})$, so that application of Itô's lemma gives, 
\begin{align}
dU_{t}&=(\theta_{4}-\tfrac{X_{t}}{2})dt+X_{t}^{1/2}dB_{t},\nonumber\\
dX_{t}&=\theta_{1}(\theta_{2}-X_{t})dt+\theta_{3}X_{t}^{1/2}dW_{t},\quad X_0=\theta_2.
\label{eq:Heston model}
\end{align}
We make the standard assumption $2\theta_{1}\theta_{2}>\theta_{3}^{2}$, 
so that the CIR process $X=\{X_t\}$ will not hit $0$; 
also, we have $\theta_1, \theta_2, \theta_3, \theta_4>0$. Process $U_{t}$ is observed at discrete times,
so that $y_{i}=U_{t_{i}}$, $i\ge 1$. Thus, we have, 
\begin{align*}
\big[\,y_{i}\,\big|\,y_{i-1},\,\{X_{s};s\in[t_{i-1},t_{i}]\}\,\big] & \sim\mathcal{N}(y_{i};\mu_{i},\Sigma_{i}),
\end{align*}
where we have set, 
\begin{align*}
\mu_{i}=y_{i-1}+\int_{t_{i-1}}^{t_{i}}( \theta_{4}-\tfrac{X_s}{2}) ds,\quad
\Sigma_{i}&=\int_{t_{i-1}}^{t_{i}}X_{s}ds.
\label{eq:likeli_SV}
\end{align*}
We chose true parameter value $\theta^{\dagger}=(0.1,1.0,0.2,0.45)$,
and generated $n=10^4$ observations, with $\Delta=1$.
We applied Algorithm \ref{alg:OnlineMLE} with $\hat{\theta}_0=(0.005,0.1,0.4,0.3)$ and $N=100$ particles.
The trajectories of the estimated parameters are given in Figure \ref{fig:heston_result}. 

\begin{figure}[!h]
\begin{centering}
\includegraphics[scale=0.6]{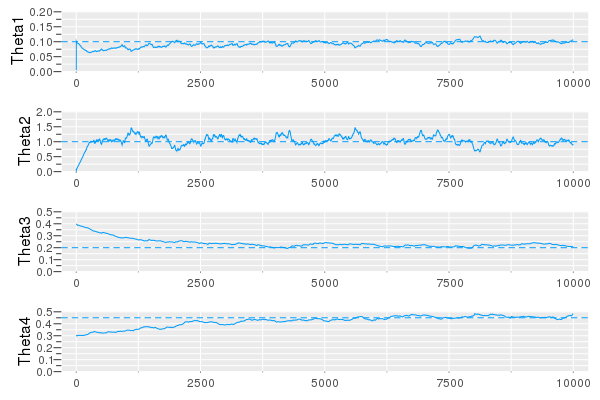}
\par\end{centering}
\vspace{-0.3cm}
\caption{Parameter estimate trajectories produced by Algorithm \ref{alg:OnlineMLE}, for data generated from Heston model \eqref{eq:Heston model}. The algorithm used initial values $\hat{\theta}_0=(0.005,0.1,0.4,0.3$), and $N=100$ particles.
The horizontal dashed lines indicate the true parameter values $\theta^{\dagger}=(0.1,1.0,0.2,0.45)$.}
\label{fig:heston_result}
\end{figure}

\subsection{Sequential Model Selection -- Real Data}
We use our methodology to carry out sequential model selection, motivated by applications in \cite{eraker2003impact,johannes2009optimal}.
Recall that BIC \citep{schwarz1978estimating} for model $\mathcal{M}$, with parameter vector $\theta$, and data $y_{1:n}$
is given by,
\begin{align*}
BIC(\mathcal{M}) & :=-2\ell_{\hat{\theta}_{MLE}}(y_{1:n})+
\mathrm{dim}(\theta)\log n,
\end{align*}
where $\hat{\theta}_{MLE}$ denotes the MLE, 
and $\ell(\cdot)$ the log-likelihood.
BIC and the (closely related) Bayes Factor are known to have good asymptotic properties, e.g.~they are consistent Model Selection Criteria, under the assumption of model-correctness, in the context of nested models, for specific classes of models (see e.g.~\cite{chib2016bayes, nishii1988maximum,sin1996information}, and \cite{yonekura2018asymptotic} for the case of discrete-time nested 
HMMs). One can plausibly conjecture such criteria will also perform well
for the type of continuous-time HMMs we consider here. See also \citet{eguchi2018schwarz} for
a rigorous analysis of BIC for diffusion-type models.
Given models $\mathcal{M}_a, \mathcal{M}_b$, with $\mathcal{M}_a\subset \mathcal{M}_b$, 
we use the difference, 
\begin{align*}
BIC(\mathcal{M}_{ab}):=-2(\ell_{\hat{\theta}_{MLE,a}}(y_{1:n})+
\ell_{\hat{\theta}_{MLE,b}}(y_{1:n}))+
(\mathrm{dim}(\theta_{a})-\mathrm{dim}(\theta_{b}))\log n,
\end{align*}
with involved terms defined as obviously, to choose between $\mathcal{M}_a$ and $\mathcal{M}_b$.

We note that 
\citet{eraker2003impact} use Bayes Factor for model selection 
in the context of SDE processes; that work 
uses MCMC to estimate parameters, and is not sequential (or online). 
We stress that our methodology allows for carrying out inference for parameters and the signal part of the model, and simultaneously allowing for model selection, all in an online fashion. Also, it is worth noting that \citet{johannes2009optimal} use
the sequential likelihood ratio for model comparison, but such quantity
can overshoot, i.e.~the likelihood ratio will tend to
choose a large model. 
Also, that work uses fixed calibrated
parameters, so it does not relate to online parameter inference. 



\begin{rem}
\label{rem:proxy}
We will use the running estimate of the parameter vector as a proxy for the
MLE given the data already taken under consideration. 
Similarly, we use the weights of the particle filter to obtain a running 
proxy of the log-likelihood evaluated at the MLE, thus overall 
an online approximation of BIC. 
Such an approach, even not fully justified theoretically,  
can provide reasonable practical insights when performing model comparison 
in an online manner -- particularly so, given when there is no alternative, to the best of our knowledge, for such an objective.
\end{rem}


We consider the following family of nested SDE models,
\begin{align*}
dX_{t} & =b_{\theta}^{(i)}(X_{t})+\theta_{4}\sqrt{X_{t}}dW_{t},
\end{align*}
where,
\begin{align*}
\mathcal{M}_{1}: &\quad  b_{\theta}^{(1)}(X_{t})=\theta_{0}+\theta_{1}X_{t},\\
\mathcal{M}_{2}: &\quad  b_{\theta}^{(2)}(X_{t})=\theta_{0}+\theta_{1}X_{t}+\theta_{2}^{2}X_{t}^{2},\\
\mathcal{M}_{3}: &\quad  b_{\theta}^{(3)}(X_{t})=\theta_{0}+\theta_{1}X_{t}+\theta_{2}^{2}X_{t}^{2}+\frac{\theta_{3}}{X_{t}}.
\end{align*}
Such models have been used for short-term interest rates; see \citet{jones2003nonlinear,durham2003likelihood,ait1996testing,bali2006comprehensive}
and references therein for more details. Motivated by
\cite{dellaportas2006bayesian, stanton1997nonparametric}, we applied
our methodology to daily 3-month Treasury Bill rates,
from 2nd of January 1970, to 29th of December 2000, providing $n=7,739$ observations.
The data can be 
obtained from the Federal Reserve Bank of St.~Louis, at webpage \url{https://fred.stlouisfed.org/series/TB3MS}. The dataset is shown
at Fig.~\ref{fig:-3-month-treasury}.

\begin{figure}[!h]
\begin{centering}
\includegraphics[scale=0.6]{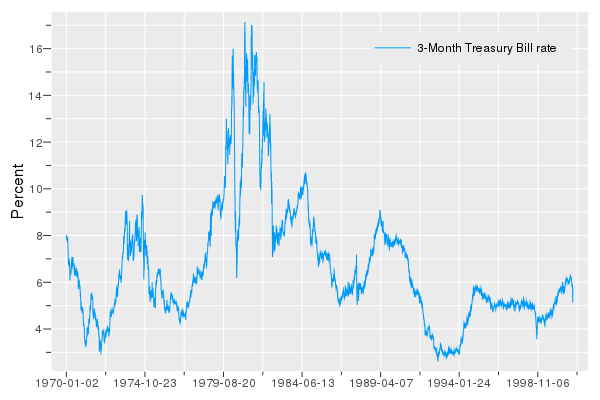}
\par\end{centering}
\vspace{-0.4cm}
\caption{Daily values of 3-Month Treasury
Bill rates from 2 Jan 1970 to 29 Dec 2000.}
\label{fig:-3-month-treasury}
\end{figure}

Fig.~\ref{fig:BICex1} shows the BIC differences, obtained online, for each of the three pairs of models. 
The results suggest that $\mathcal{M}_{2}$ does not
provide a good fit to the data -- relatively to $\mathcal{M}_{1}$, $\mathcal{M}_{3}$ -- for almost all period under consideration. 
In terms of $\mathcal{M}_{1}$ against $\mathcal{M}_{3}$, there is evidence 
that once the data from around 1993 onwards are taken under consideration, $\mathcal{M}_{3}$ is preferred to $\mathcal{M}_{1}$.
In general, one can claim that models with non-linear drift should be taken under consideration for fitting daily 3-month Treasury Bill rates,
without strong evidence in favour or against linearity. 
This non-definite conclusion is in some agreement with the empirical
studies in \citet{chapman2000short,durham2003likelihood,dellaportas2006bayesian}.

\begin{figure}[!h]
\noindent \begin{centering}
\includegraphics[scale=0.6]{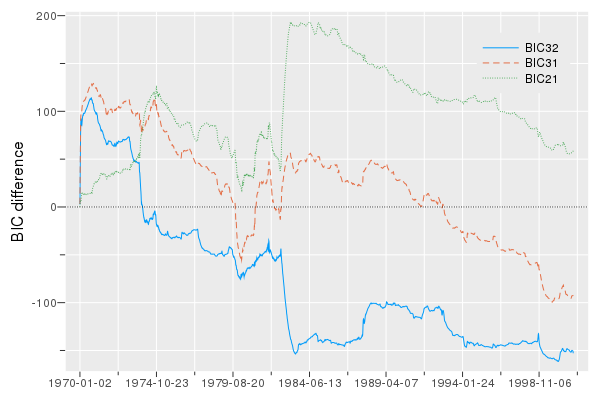}
\par\end{centering}
\vspace{-0.2cm}
\caption{Online estimation of BIC differences. The solid line stands
for $BIC(\mathcal{M}_{32})$, the dashed line for $BIC(\mathcal{M}_{31})$,
and the dotted line for $BIC(\mathcal{M}_{21})$.}
\label{fig:BICex1}
\end{figure}

\section{Conclusions and Future Directions}
\label{sec:Conclusion-and-remarks}

We have introduced an online particle smoothing methodology for discretely observed (jump) diffusions with intractable transition densities. Our approach overcomes 
such intractability by formulating the problem on
pathspace, thus delivering an algorithm that -- besides regulatory conditions -- requires only the 
weak invertibility Assumption \ref{ass:invert}. Thus, we have covered 
a rich family of SDE models, when related literature  
imposes quite strong restrictions. Combining our online
smoothing algorithm with a Robbins-Monro-type approach of Recursive Maximum-Likelihood, we set up an online stochastic gradient-ascent for the likelihood function of the SDEs under consideration. 
The algorithm provides a wealth of interesting output, that can provide a lot of useful insights in statistical applications. The numerical examples show a lot of promise for the performance of the methodology.
Our framework opens up a number of routes for insights and future research, 
including the ones described below.
\begin{itemize}
\item[(i)] In the case of SDEs of jumps, we have focused on jump dynamics driven by compound Poisson processes. There is great scope for generalisation here, and one can extend the algorithm to different cases of jump processes, also characterised by more complex dependencies between the jumps and the paths of the solution of the SDE, $X=\{X_t\}$.  
Extensions to time-inhomogeneous cases with $b(v)=b(t,v)$, $\sigma(v) = \sigma(t,v)$,
are immediate; we have chosen the time-homogeneous models only for purposes of  presentation simplicity. 
\item[(ii)] Since the seminal work of \cite{dely:06}, more `tuned' auxiliary bridge processes have appeared in the literature, see e.g.~the works of \cite{schauer2017guided, van2017bayesian}. Indicatively, the work in \cite{schauer2017guided} considers bridges of the form (in one of the many options they consider),
\begin{equation}
d\tilde{X}_t = \Big\{b(\tilde{X}_t) + \Sigma(\tilde{X}_t)\Sigma^{-1}(x')\frac{x'-\tilde{X}_t}{T-t}    \Big\}dt + 
\sigma(\tilde{X}_t)dW_t.
\label{eq:sch}
\end{equation}
Auxiliary bridge processes that are closer in dynamics to the diffusion bridges of the given signal are expected to reduce the variability of Monte-Carlo algorithm, thus progress along the above direction can be immediately incorporated in our methodology and improve its performance. For instance, as noted in \cite{schauer2017guided}, use of (\ref{eq:sch}) will give a Radon-Nikodym derivative where stochastic integrals cancel out. Such a setting is known to considerably reduce the variability of Monte-Carlo methods, see e.g.~the numerical examples in \cite{durh:02} and the discussion in \cite[Section 4]{papaspiliopoulos2009importance}.

\item[(iii)] The exact specification of the recursion used for the online estimation of unknown parameters is in itself an problem of intensive research in the field of stochastic optimisation. One would ideally aim for the recursion procedure to provide parameter estimates which are as close to the unknown parameter as the data (considered thus far) permit. In our case, we have used a fairly `vanilla' recursion, maybe with the exception of the \texttt{Adam} variation. E.g., recent works in the Machine Learning community have pointed at the use of `velocity' components in the recursion to speed up convergence,
see, e.g., \cite{suts:13, yuan:16}.
\item[(iv)] 
We mentioned in the main text several modifications that can
improve algorithmic performance: dynamic resampling, stratified resampling, non-blind proposals in the filtering steps, choice of auxiliary processes. Parallelisation and use of  
HPC are obvious additions in this list.

\item[(v)] 
We stress that the algorithm involves a filtering step, and a step 
that approximates the values of the instrumental function. 
These two procedures should be thought of separately.  
One reason for including two approaches in the case of jump diffusions (Constructs 1 and 2) is indeed to highlight this point. The two Constructs are identical in terms of the filtering part.
Construct 1 incorporates in $\mathbf{x}'$ the location of the path at all times of jumps; thus, when the algorithm `mixes' all pairs of $\{x_{k-1}^{(j)}\}$,  $\{\mathbf{x}_{k}^{(i)}\}$,  at the update of the instrumental function (see Step (iv) of Algorithm \ref{alg:Forward-PF-infinite}), many of such pairs can be incompatible (simply consider any such pair with $i\neq j$, when $\{\mathbf{x}_{k}^{(i)}\}$ has in fact been generated conditionally on $\{x_{k-1}^{(i)}\}$). Such an effect is even stronger in the case of the standard algorithm applied in the motivating example in the Introduction, and partially explains the inefficiency of that algorithm. In contrast, in Construct 2, $\mathbf{x}'$ contains less information about the underlying paths, 
thereby improving the compatibility of pairs selected from  particle populations $\{x_{k-1}^{(j)}\}$,  $\{\mathbf{x}_{k}^{(i)}\}$,
thus -- not surprisingly -- Construct 2 seems more effective 
than Construct~1. 
\item[(vi)]
The experiments included in Section 6 of the paper aim to highlight the \emph{qualitative} aspects of the introduced approach. 
The key property of robustness to mesh-refinement underlies all numerical applications, and is explicitly illustrated  in the experiment in Section 6.2, and within Figure~6.4.
In all cases, the code was written in \texttt{Julia}, was not parallelised or optimised and was run on a PC, with executions times that were quite small due to the online nature of the algorithm (the most expensive experiment took about 30 minutes).
We have avoided a comparison with potential alternative approaches as we believe this would obscure the main messages we have tried to convey within the numerical results section. Such a detailed numerical study can be the subject of future work.
\end{itemize}

\section*{Acknowledgments}
We are grateful to the AE and the referees for several useful comments that helped us to improve the content of the paper.


\bibliographystyle{apalike}

\bibliography{refereces}

\appendix

\section{Proof of Proposition 1} \label{proof1}
\begin{proof}
We have the integral, 
\begin{align}
\int T_{\theta,n}(\mathbf{x}_{n})\,p_{\theta}(d\mathbf{x}_{n}| y_{0:n}) =  
\int S_{\theta}(\mathbf{x}_{0:n})\,p_{\theta}(d\mathbf{x}_{0:n-1}| y_{0:n-1},\mathbf{x}_{n})\otimes 
p_{\theta}(d\mathbf{x}_{n}| y_{0:n}).
\label{eq:iinf}
\end{align}
Also, simple calculations give, 
\begin{align*}
p_{\theta}(d\mathbf{x}_{0:n-1}| y_{0:n-1},&\mathbf{x}_{n})
\otimes p_{\theta}(\mathbf{x}_{n}| y_{0:n}) \\ 
&\equiv  p_{\theta}(d\mathbf{x}_{0:n-1}| y_{0:n},\mathbf{x}_{n})\otimes p_{\theta}(\mathbf{x}_{n}| y_{0:n})
= p_{\theta}(d\mathbf{x}_{0:n}| y_{0:n}).
\end{align*}
Using this expression in the integral on the right side of (\ref{eq:iinf}) completes the proof.
\end{proof}

\section{Proof of Proposition 2} \label{proof2}
\begin{proof}
Simply note that,
\begin{align*}
\int T_{\theta,n-1}(\mathbf{x}_{n-1})
\,&p_{\theta}(d\mathbf{x}_{n-1}| y_{0:n-1},\mathbf{x}_{n}) =\\&=
\int S_{\theta,n-1}(\mathbf{x}_{0:n-1})\, p_{\theta}(d\mathbf{x}_{0:n-2}| y_{0:n-2},\mathbf{x}_{n-1})\,p_{\theta}(d\mathbf{x}_{n-1}| y_{0:n-1},\mathbf{x}_{n}).
\end{align*}
Then, we have that,
\begin{align*}
p_{\theta}&(d\mathbf{x}_{0:n-2}| y_{0:n-2},\mathbf{x}_{n-1})\,p_{\theta}(d\mathbf{x}_{n-1}| y_{0:n-1},\mathbf{x}_{n}) \\
&\qquad \equiv 
p_{\theta}(d\mathbf{x}_{0:n-2}| \mathbf{x}_{n-1}, y_{0:n-1},\mathbf{x}_{n})\,p_{\theta}(d\mathbf{x}_{n-1}| y_{0:n-1},\mathbf{x}_{n})\\
&\qquad \equiv   p_{\theta}(d\mathbf{x}_{0:n-1}| y_{0:n-1},\mathbf{x}_{n}).
\end{align*}
Replacing the probability measure on the left side of the above equality with its equal on the right side, and using the latter in the integral above completes the proof for the first equation in the statement of the proposition.
The second equation follows from trivial use of Bayes rule.
\end{proof}

\section{Proof of Proposition 3} \label{proof3}
\begin{proof}
Part (i) follows via the same arguments as in \citet[Corollary 2]{olsson2017efficient}, based on \citet{azuma1967weighted} and \citet[Lemma 4]{douc2011sequential}.
For part (ii), one can proceed as follows.
Given $n\geq0$, we define the event  $A_{N}(1/j):=\{|\mathcal{S}_{\theta,n}-\hat{\mathcal{S}}_{\theta,n}^{N}|>\frac{1}{j}\}$,
where we added superscript $N$ to stress the dependency of the estimate
on the number of particles. Then,
\begin{align}
\mathrm{Prob}\,\big[\,\lim_{N\rightarrow\infty}\hat{\mathcal{S}}_{\theta,n}^{N}=\mathcal{S}_{\theta,n}\,\big]=
1-\mathrm{Prob}&\,\big[\,\cup_{j=1}^{\infty}\limsup_{N\rightarrow\infty}A_{N}(1/j)\,\big]\nonumber\\ 
&\geq 1-\sum_{j=1}^{\infty}\mathrm{Prob}\,\big[\,\limsup_{N\rightarrow\infty}A_{N}(1/j)\,\big].\label{eq:SS}
\end{align}
From (i), we get   $\mathrm{Prob}\,[\,A_{N}(1/j)\,]  \leq b_{n}e^{-c_{n}N\left(\frac{1}{j}\right)^{2}}$, so
$\sum_{N=1}^{\infty}\mathrm{Prob}\,[\,A_{N}(1/j)\,]<\infty$.
The Borel--Cantelli lemma gives $\mathrm{Prob}\,[\,\limsup_{N\rightarrow\infty}A_{N}(1/j)\,]=0$,
therefore the result follows from (\ref{eq:SS}).
\end{proof}

\end{document}